\documentclass[conference]{IEEEtran}

\usepackage{threeparttable}
\usepackage[T1]{fontenc}    
\usepackage{hyperref}       
\usepackage{url}            
\usepackage{booktabs}       
\usepackage{amsfonts}       
\usepackage{nicefrac}       
\usepackage{microtype}      
\usepackage{times}
\usepackage{helvet}
\usepackage{courier}
\usepackage{amsmath,amsfonts,amssymb,bbm,amsthm} 
\usepackage{multirow}
\usepackage{extarrows}
\usepackage{xcolor}
\usepackage{algorithm}
\usepackage{algorithmic}
\usepackage{graphicx}
\usepackage{color}
\usepackage{caption}
\usepackage{setspace}

\newtheorem{theorem}{Theorem}
\newtheorem{lemma}{Lemma}

\newtheorem{corollary}{Corollary}
\newtheorem{definition}{Definition}
\newtheorem{remark}{Remark}

\allowdisplaybreaks

\newcommand\bovermat[2]{%
	\makebox[0pt][l]{$\smash{\overbrace{\phantom{%
					\begin{matrix}#2\end{matrix}}}^{\text{#1}}}$}#2}

\onecolumn
\doublespacing

\begin{document}

\title{Fundamental Limits of Coded Linear Transform}

\author{\IEEEauthorblockN{Sinong Wang$^{*}$, Jiashang Liu$^{*}$, Ness Shroff$^{*\dag}$ and Pengyu Yang$^{\intercal}$}
	\IEEEauthorblockA{$^{*}$Department of Electrical and Computer Engineering, The Ohio State University\\
		Email: 	\{wang.7691, liu.3992\}@osu.edu\\
		$^{\dag}$Department of Computer Science and Engineering, The Ohio State University\\
		Email: shroff.11@osu.edu\\
		$^{\intercal}$Department of Mathematics, The Ohio State University\\
		Email: yang.2214@osu.edu}}

\maketitle

\begin{abstract}
In large scale distributed linear transform problems, coded computation plays an important role to effectively deal with ``stragglers'' (distributed computations that may get delayed due to few slow or faulty processors).  We propose a coded computation strategy, referred to as \emph{diagonal code}, that achieves the optimum recovery threshold and the optimum computation load. This is the first code that simultaneously achieves two-fold optimality in coded distributed linear transforms. Furthermore, by leveraging the idea from random proposal graph theory, we design two \emph{random codes} that can guarantee optimum recovery threshold with high probability but with much less computation load. These codes provide order-wise improvement over the state-of-the-art. Moreover, the experimental results show significant improvement compared to both uncoded and existing coding schemes.	
\end{abstract}

\section{Introduction}

In this paper, we propose a novel coding theoretical framework for speeding up the computation of linear transforms $\mathbf{y}=\mathbf{Ax}$ across multiple processing units, where  $\mathbf{A}\in\mathbb{R}^{r\times t}$ and $\mathbf{x}\in\mathbb{R}^t$. This problem is the key building block in machine learning and signal processing problems, and has been used in a large variety of application areas. Optimization-based training algorithms such as gradient descent in regression and classification problems, backpropagation algorithms in deep neural networks, require the computation of large linear transforms of high-dimensional data. It is also the critical step in the dimensionality reduction techniques such as Principal Component Analysis (PCA) and Linear Discriminant Analysis (LDA). Many such applications have large-scale datasets and massive computational tasks that force practitioners to adopt distributed computing frameworks such as Hadoop~\cite{dean2008mapreduce} and Spark~\cite{zaharia2010spark} to increase the learning speed. 

Classical approaches of distributed linear transforms rely on dividing the input matrix $\mathbf{A}$ equally among all available worker nodes, and the master node has to collect the results from all workers to output $\mathbf{y}$. As a result, a major performance bottleneck is the latency incurred in waiting for a few slow or faulty processors -- called ``stragglers'' to finish their tasks~\cite{dean2013tail}.  Recently, forward error correction and other coding techniques have shown to provide an effective way to deal with the ``straggler'' in the distributed computation tasks~\cite{dutta2016short,lee2017speeding,lee2017coded,tandon2017gradient,yu2017polynomial,li2018fundamental,wang2018coded}. By exploiting coding redundancy, the vector $\mathbf{y}$ is recoverable even if all workers have not finished their computation, thus reducing the delay due to straggle nodes. One main result in this paper is the development of optimal codes to deal with stragglers in computing distributed linear transforms, which provides an optimum recovery threshold and optimum computation redundancy.

In a general setting with $m$ workers, the input matrix $\mathbf{A}$ is divided into $n$ submatrices $\mathbf{A}_i\in\mathbb{R}^{\frac{r}{n}\times t}$ along the row side. Each worker stores $1/n$ fraction of $\mathbf{A}$, denoted by $\tilde{\mathbf{A}}_i$, which is an arbitrary function of $[\mathbf{A}_i]_{i=1}^{n}$. Each worker computes a partial linear transform and returns the result to the master node. By carefully designing the coded computation strategy, i.e., designing $\tilde{\mathbf{A}}_i$, the master node only needs to wait for the fastest subset of workers before recovering $\mathbf{y}$. Given a computation strategy, the \emph{recovery threshold} is defined as the minimum number of workers that the master needs to wait for in order to compute $\mathbf{y}$, and the \emph{computation load} is defined as the total number of submatrices each worker access. Based on the above notation, our main focus is

\emph{What is the minimum recovery threshold and computation load for coded distributed linear transforms? Can we find an optimum computation strategy that achieves both optimality?}

To that end, there have been works that have investigated the problems of reducing the recovery threshold and computational load for coded distributed linear transforms by leveraging ideas from coding theory. The first work~\cite{lee2017speeding} proposes a maximum distance separable (MDS) code-based linear transform. We illustrate this approach using the following example. Consider a distributed system with $3$ workers, the coding scheme first splits the matrix $\mathbf{A}$ into two submatrices, i.e., $\mathbf{A}=[\mathbf{A}_1;\mathbf{A}_2]$. Then each worker computes $\mathbf{A}_1\mathbf{x}$, $\mathbf{A}_2\mathbf{x}$ and $(\mathbf{A}_1+\mathbf{A}_2)\mathbf{x}$.  The master node can compute $\mathbf{A}\mathbf{x}$ as soon as \textbf{any} $2$ out of the $3$ workers finish, and therefore achieves a recovery threshold of $2$ and a computation load of $4$. More specifically, it is shown that MDS code achieves a recovery threshold of $\Theta(m)$ and computation load $O(mn)$. An improved scheme proposed in~\cite{dutta2016short}, referred to as the \emph{short MDS code}, can offer a larger recovery threshold but with a constant reduction of computation load by imposing some sparsity of the encoded submatrices.  More recently, the work in~\cite{yu2017polynomial} designs a type of \emph{polynomial code}. It achieves the recovery threshold of $n$ that is independent of number of workers. However, it incurs a computation load of $mn$. In other words, each worker of polynomial code has to access the information of entire data matrix $\mathbf{A}$.

In this paper, we show that, given number of partitions $n$ and number of stragglers $s$, the minimum recovery threshold is $n$ and the minimum computation load is $n(s+1)$. We design a novel coded computation strategy, referred to as the \emph{$s$-diagonal code}. It achieves both optimum recovery threshold and optimum computation load, which significantly improves the state-of-the-art. We also design a hybrid decoding algorithm between the peeling decoding and Gaussian elimination. We show that the decoding time is nearly linear time of the output dimension $O(r)$. 

Moreover, we define a new \emph{probabilistic recovery threshold} performance metric that allows the coded computation scheme to provide the decodability in a large percentage (with high probability) of straggling scenarios. Based on this new metric, we propose two kinds of random codes, $p$\emph{-Bernoulli code} and $(d_1,d_2)$\emph{-cross code}, which can probabilistically  guarantee the same recovery threshold but with much less computation load. 

In the $p$\emph{-Bernoulli code}, each worker stores a weighted linear combination of several submatrices such that each $\mathbf{A}_i$ is chosen with probability $p$ independently. We show that, if $p=2\log(n)/n$, the $p$\emph{-Bernoulli code} achieves the probabilistic recovery threshold of $n$ and a constant communication and computation load $O(n\log(n))$ in terms of the number of stragglers. In the $(d_1,d_2)$\emph{-cross code}, each worker first randomly and uniformly chooses $d_1$ submatrices, and each submatrix is randomly and uniformly assigned to $d_2$ workers; then each worker stores a linear combination of the assigned submatrices and computes the corresponding linear transform. We show that, when the number of stragglers $s=\text{poly}(\log(n))$, the $(2,3)$-cross code, or when $s=O(n^{\alpha}),\alpha<1$, the $(2,2/(1-\alpha))$-cross code achieves the probabilistic recovery threshold of $n$ and a much lower communication and computation load $O(n)$. We compare our proposed codes and the existing codes in Fig.~\ref{fig:summary}.

We finally implement and benchmark the constructed code in this paper on the Ohio Super Computing Center~\cite{OhioSupercomputerCenter1987}, and empirically demonstrate its performance gain compared with existing strategies.

\begin{figure}[t]
	\vskip -0.05in
	\begin{center}
		\centerline{\includegraphics[width=4.5in]{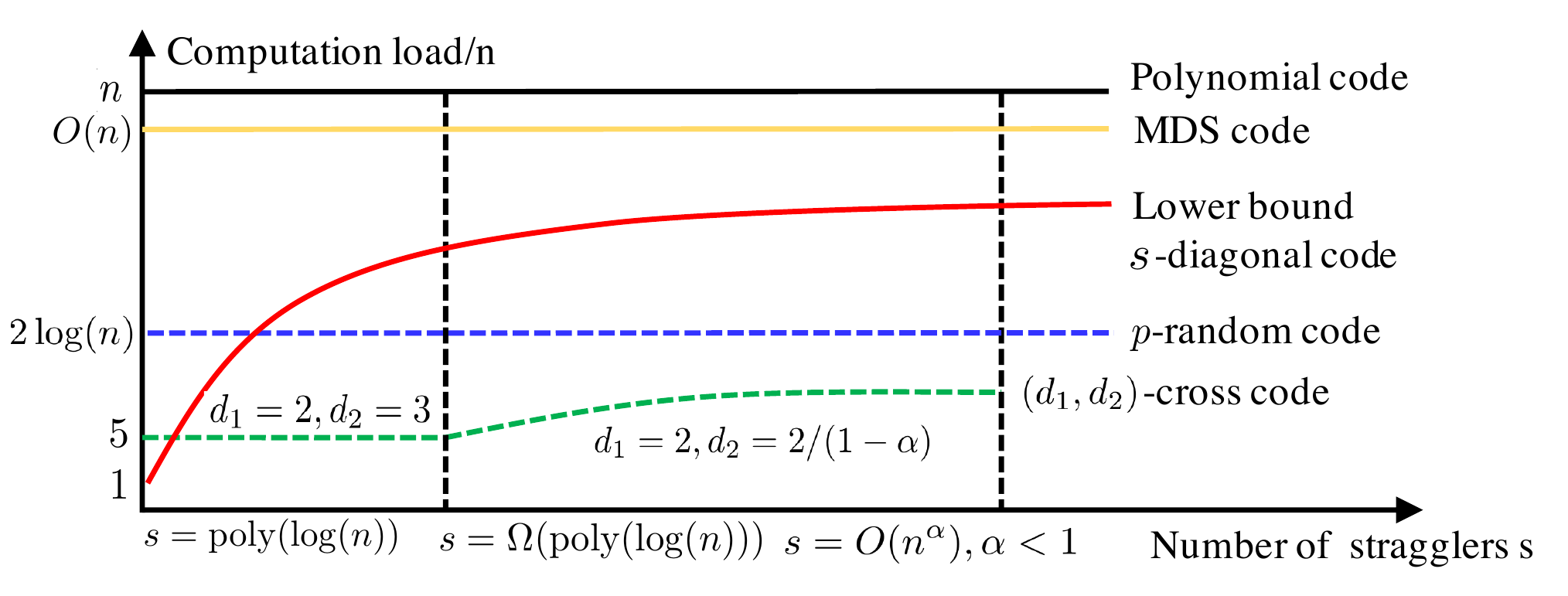}}
		\vskip -0.0in
		\caption{Comparison of computation load among the existing coded schemes and constructed codes in this paper. All the deterministic codes except MDS code achieve the optimum recovery threshold of $n$. The random codes achieves the recovery threshold of $n$ with high probability.}
		\vskip -0.4in
		\label{fig:summary}
	\end{center}
\end{figure}

\section{Problem Formulation}

We are interested in distributedly computing a linear transform with matrix $\mathbf{A}\in \mathbb{R}^{r\times t}$ and input vector $\mathbf{x}\in\mathbb{R}^{t}$ for some integers $r,t$. The matrix $A$ is evenly divided along the row side into $n$ submatrices.
\begin{equation}
\mathbf{A}^T=[\mathbf{A}_1^T, \mathbf{A}_2^T, \mathbf{A}_3^T,\ldots, \mathbf{A}_n^T]
\end{equation}
Suppose that we have a master node and $m$ worker nodes. Worker $i$ first stores $1/n$ fraction of matrix $A$, defined as $\tilde{\mathbf{A}}_i\in\mathbb{R}^{\frac{r}{n}\times t}$. Then it can compute a partial linear transform $\tilde{\mathbf{y}}_i=\tilde{\mathbf{A}}_i\mathbf{x}$ and return it to the master node. The master node waits only for the results of a subset of workers $\{\tilde{\mathbf{A}}_i\mathbf{x}|i\in I\subseteq[m]\}$ to recover the final output $\mathbf{y}$ using certain decoding algorithms. In general, we assume that the scale of data is much larger than the scale of the number of workers, i.e., $m\ll r,t$. The main framework is illustrated in Fig~\ref{fig:model}.

Given the above system model, we can formulate the coded distributed linear transform problem based on the following definitions.
\begin{definition}\label{def:coded_scheme}
	\emph{(Coded computation strategy)} A coded computation strategy is an $m\times n$ coding matrix $\mathbf{M}=[m_{ij}]_{i\in[m],j\in[n]}$ that is used to compute each $\tilde{\mathbf{A}}_i$. Specifically,
	\begin{equation}\label{eq:code_scheme}
	\tilde{\mathbf{A}}_i=\sum\limits_{j=1}^{n}m_{ij}\mathbf{A}_j, \forall i\in[m].
	\end{equation}
	Then each worker $i$ computes a partial linear transform $\tilde{\mathbf{y}}_i=\tilde{\mathbf{A}}_i\mathbf{x}$.
\end{definition}
This is a general definition of a large class of coded computation schemes. For example, in the polynomial code~\cite{yu2017polynomial}, the coding matrix $\mathbf{M}$ is the Vandermonde matrix. In the MDS type of codes~\cite{dutta2016short,lee2017speeding,lee2017coded}, it is a specific form of corresponding generator matrix.

\begin{definition}\label{def:recovery}
	\emph{(Recovery threshold)} A coded computation strategy $\mathbf{M}$ is $k$-recoverable if for any subset $I\subseteq [m]$ with $|I| = k$, the master node can recover $\mathbf{Ax}$ from $\{\tilde{\mathbf{y}_i}|i\in I\}$. The recovery threshold $\kappa(\mathbf{M})$ is defined as the minimum integer $k$ such that strategy $\mathbf{M}$ is $k$-recoverable.
\end{definition}

\begin{definition}\label{def:load}
	\emph{(Computation load)} The computation load of strategy $\mathbf{M}$ is defined as $l(\mathbf{M})=\|\mathbf{M}\|_{0}$, the number of the nonzero elements of coding matrix.
\end{definition}

In the task of computing a single linear transform, the computation load can be regarded as the total number of partial linear transforms $\mathbf{A}_i\mathbf{x}$ calculated locally. In other machine learning applications such as stochastic gradient descent, although one can encode the data matrix once and use it in subsequent iterations, the data set is extremely large-scale and sparse, i.e., $nnz(\mathbf{A})\ll rt$, and the density of coded matrix $\tilde{\mathbf{A}}_i$ is roughly equal to $\|\mathbf{M}_i\|_0$ times of the original data matrix $\mathbf{A}$. Therefore, $l(\mathbf{M})$ denotes the factor of total increasing computation time due to the coding operation in (\ref{eq:code_scheme}).

\begin{figure}[t]
	\vskip -0.05in
	\begin{center}
		\centerline{\includegraphics[width=5in]{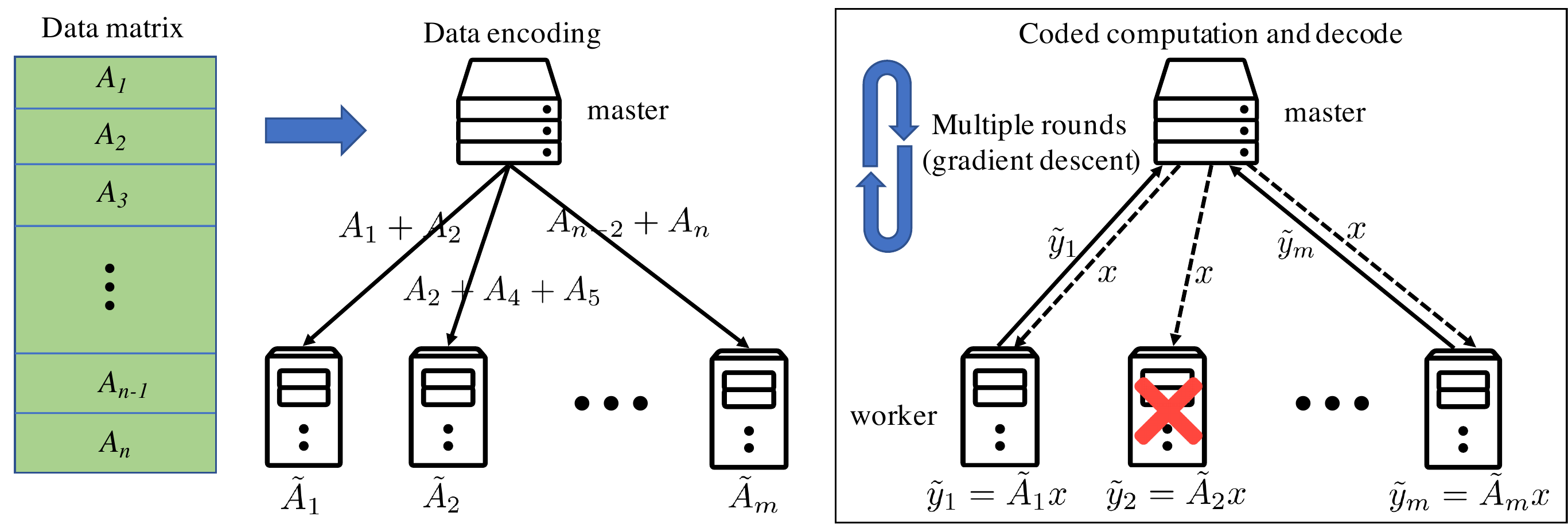}}
		\vskip 0in
		\caption{Framework of coded distributed linear transform}
		\vskip -0.4in
		\label{fig:model}
	\end{center}
\end{figure}


\section{Diagonal Code and its Optimality}

In this section, we first describe the optimum recovery threshold and optimum computation load. Then we introduce the $s$-diagonal code that exactly matches such lower bounds.

\subsection{Fundamental Limits on Recovery Threshold and Computation Load}
We first establish the information theoretical lower bound of the recovery threshold.
\begin{theorem}\label{thm:lowerbound_recovery}
	\emph{(Optimum recovery threshold)} For any coded computation scheme $\mathbf{M}\in\mathbb{R}^{m\times n}$ using $m$ workers that can each store $\frac{1}{n}$ fraction of $\mathbf{A}$, the minimum recovery threshold satisfies
	\begin{equation}
	\kappa^*=\min\limits_{\mathbf{M}\in\mathbb{R}^{m\times n}}\kappa(\mathbf{M})\geq n.
	\end{equation}
\end{theorem}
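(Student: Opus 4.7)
The plan is to argue by a dimension/rank count: the observations returned by fewer than $n$ workers simply cannot carry enough linear information to recover $\mathbf{A}\mathbf{x}$ in the worst case. The key observation is that by Definition~\ref{def:coded_scheme}, each $\tilde{\mathbf{y}}_i=\sum_{j=1}^n m_{ij}\mathbf{A}_j\mathbf{x}$ is a \emph{linear} function of the block vector $\mathbf{z}=(\mathbf{A}_1\mathbf{x};\ldots;\mathbf{A}_n\mathbf{x})\in\mathbb{R}^r$, which is exactly $\mathbf{A}\mathbf{x}$ up to a row permutation. Stacking a candidate subset $I\subseteq[m]$ of responses gives a linear map $\Phi_I:\mathbf{z}\mapsto\{\tilde{\mathbf{y}}_i\}_{i\in I}$ whose range lies in $\mathbb{R}^{|I|\cdot r/n}$, and this map is nothing other than the block-Kronecker action of the submatrix $\mathbf{M}_I\in\mathbb{R}^{|I|\times n}$ on the $n$ sub-blocks of $\mathbf{z}$.

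Next I would reduce $k$-recoverability to a statement about $\mathbf{M}_I$. Concretely, I choose the worst-case instance $\mathbf{A}$ to have full row rank (possible since the setup has $r\ll t$), so that as $\mathbf{x}$ ranges over $\mathbb{R}^t$ the vector $\mathbf{A}\mathbf{x}$ sweeps out all of $\mathbb{R}^r$. Unique recovery of $\mathbf{A}\mathbf{x}$ from $\{\tilde{\mathbf{y}}_i\}_{i\in I}$ thus demands that $\Phi_I$ be injective on $\mathbb{R}^r$; by the block structure this is equivalent to $\mathbf{M}_I$ having column rank $n$.

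Then a one-line dimension argument finishes the proof: a matrix in $\mathbb{R}^{|I|\times n}$ can have column rank $n$ only if $|I|\geq n$. Hence any subset $I$ with $|I|<n$ makes $\Phi_I$ have a nontrivial kernel, so one can produce two distinct inputs $\mathbf{x},\mathbf{x}'$ with $\mathbf{A}\mathbf{x}\neq \mathbf{A}\mathbf{x}'$ yet identical observations, contradicting $k$-recoverability. Taking the minimum over strategies and subsets yields $\kappa^*\geq n$.

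There is no real technical obstacle here; the only point that needs care is the adversarial choice of $\mathbf{A}$. If one insisted on an $\mathbf{A}$-dependent lower bound, one would have to assume $\mathrm{rank}(\mathbf{A})=r$ or phrase the bound over worst-case $\mathbf{A}$, and I would state this explicitly so that the cut-set / rank argument applies cleanly. Everything else is elementary linear algebra and should fit in a few lines.
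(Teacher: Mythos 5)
Your proof is correct and is essentially the cut-set argument the paper gestures at: the observations of any set $I$ of workers factor through the linear map $\mathbf{M}_I\otimes\mathbf{I}_{r/n}$, which has rank at most $|I|\cdot r/n$, so injectivity on a full-dimensional input (take $\mathbf{A}$ of rank $r$, possible when $r\leq t$) forces $|I|\geq n$. Your explicit caveat about needing the worst-case (full row rank) $\mathbf{A}$, so that a nontrivial kernel of $\Phi_I$ actually produces two indistinguishable but distinct values of $\mathbf{A}\mathbf{x}$, is exactly the right hygiene and matches the spirit of the cited cut-set bound.
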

The proof is similar with the one in~\cite{yu2017polynomial} that uses the cut-set type argument around the master node. This bound implies that the optimum recovery threshold is independent of the number of workers. We next establish the lower bound of the computation load, i.e., the density of coding matrix $\mathbf{M}$.
\begin{theorem}\label{thm:lowerbound_load}
	\emph{(Optimum computation load)} For any coded computation scheme $\mathbf{M}\in\mathbb{R}^{m\times n}$ using $m$ workers that can each stores $\frac{1}{n}$ fraction of $\mathbf{A}$, to resist $s$ stragglers, we must have
	\begin{equation}
	l(\mathbf{M})\geq n(s+1).
	\end{equation}
\end{theorem}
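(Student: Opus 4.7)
The plan is to translate the straggler-resilience hypothesis into a combinatorial condition on the support pattern of the coding matrix $\mathbf{M}$, and then argue that this condition forces every column to contain at least $s+1$ nonzero entries, which immediately yields the bound.

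First I would reformulate what it means for $\mathbf{M}$ to resist $s$ stragglers. Being able to decode from any $m-s$ surviving workers means that, for every index set $I \subseteq [m]$ with $|I| = m-s$, the master must be able to recover $\mathbf{A}\mathbf{x} = [\mathbf{A}_1\mathbf{x}; \ldots; \mathbf{A}_n\mathbf{x}]$ from the partial results $\{\tilde{\mathbf{y}}_i\}_{i\in I} = \{\sum_j m_{ij} \mathbf{A}_j\mathbf{x}\}_{i\in I}$. Since $\mathbf{A}$ and $\mathbf{x}$ may be arbitrary, the unknowns $\mathbf{A}_j\mathbf{x}$ are free vectors in $\mathbb{R}^{r/n}$, so recovery is possible only if the submatrix $\mathbf{M}_I \in \mathbb{R}^{(m-s)\times n}$ formed by the rows of $\mathbf{M}$ indexed by $I$ has full column rank $n$. (Otherwise there is a nonzero vector in the kernel of $\mathbf{M}_I$, yielding two distinct $\mathbf{A}\mathbf{x}$ assignments that produce identical partial results.)

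Next I would exploit this rank condition one column at a time. Fix a column $j \in [n]$, and let $S_j = \{i \in [m] : m_{ij} \neq 0\}$ be its support. Suppose for contradiction that $|S_j| \leq s$. Then I can choose the set of stragglers to contain $S_j$, i.e., take $I \subseteq [m] \setminus S_j$ with $|I| = m-s$ (which is possible whenever $m - s \leq m - |S_j|$). With this choice the $j$-th column of $\mathbf{M}_I$ is identically zero, so $\text{rank}(\mathbf{M}_I) \leq n-1$, contradicting the recovery condition established above. Hence $|S_j| \geq s+1$ for every column $j$.

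Summing over all columns gives $l(\mathbf{M}) = \|\mathbf{M}\|_0 = \sum_{j=1}^{n} |S_j| \geq n(s+1)$, as claimed. The only step requiring care is the reduction from ``decodability for all $\mathbf{A},\mathbf{x}$'' to ``full column rank of every $(m-s)$-row submatrix''; once that is in hand the rest is a one-line pigeonhole / contradiction argument, so I do not expect any serious obstacle. A minor subtlety is the implicit assumption $m \geq n + s$, which is necessary for any scheme to resist $s$ stragglers in the first place (otherwise no $(m-s)\times n$ submatrix of $\mathbf{M}$ can have rank $n$), and under which the straggler set with $I \cap S_j = \emptyset$ indeed exists.
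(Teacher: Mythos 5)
Your proof is correct and takes essentially the same approach as the paper's: both argue that each data block $\mathbf{A}_j$ must be accessible by at least $s+1$ workers (else straggling all of them makes $\mathbf{A}_j\mathbf{x}$ unrecoverable, or equivalently some $(m-s)\times n$ submatrix has a zero column), and then sum over columns. The paper phrases this via degrees in a worker--block bipartite graph, while you phrase it via column supports and rank of submatrices, but the underlying pigeonhole argument is identical.
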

The polynomial code~\cite{yu2017polynomial} achieves the optimum recover threshold of $n$. However, the coding matrix (Vandermonde matrix) is fully dense, i.e., $l(\mathbf{M})=nm$, and far beyond the above bound. The short MDS code~\cite{dutta2016short} can reduce the computation load but sacrifices the recovery threshold. Therefore, a natural question that arises is, can we design a code achieves both optimum recovery threshold and optimum computation load? We will answer this question in the sequel of this paper.

\subsection{Diagonal Code}

Now we present the $s$-diagonal code that achieves both the optimum recovery threshold and optimum computation load for any given parameter values of $n$, $m$ and $s$.
\begin{definition}\label{def:diagonal}
	\emph{($s$-diagonal code)} Given parameters $m$,$n$ and $s$, the $s$-diagonal code is defined as
	\begin{equation}
	\tilde{\mathbf{A}}_i=\sum\nolimits_{j=\max\{1,i-s\}}^{\min\{i,n\}}m_{ij}\mathbf{A}_j, \forall i\in[m],
	\end{equation}
	where each coefficient $m_{ij}$ is chosen from a finite set $S$ independently and uniformly at random.
\end{definition}
The reason we name this code as the $s$-diagonal code is that the nonzero positions of the coding matrix $\mathbf{M}$ show the following diagonal structure.

\begin{equation}
\mathbf{M}^T=\begin{bmatrix}
\bovermat{$s+1$}{* & * & \cdots & * & *& 0 }& 0 & 0 & \cdots & 0 \\
0 & * & * & \cdots & * & * & 0 & 0 & \cdots & 0 \\
0 & 0 & * & *  & \cdots & * & * & 0 & \cdots & 0 \\
\vdots & \vdots & \vdots & \vdots  & \ddots & \ddots & \ddots & \ddots  & \ddots  & \vdots \\
0 & \cdots & 0 & 0 & * & * & * & \cdots & * & 0 \\
0 & \cdots & 0 & 0 & 0 & * & * & \cdots & * & * \\
\end{bmatrix},
\end{equation}
where $*$ indicates nonzero entries of $\mathbf{M}$. Before we analyze the recovery threshold of the diagonal code, the following example provides an instantiation of the above construction for $n=4$, $s=1$ and $m=5$.

\textbf{Example 1}: ($1$-Diagonal code) Consider a distributed linear transform task $\mathbf{Ax}$ using $m=5$ workers. We evenly divide the matrix $\mathbf{A}$ along the row side into $4$ submatrices: $\mathbf{A}^T=[\mathbf{A}_1^T,\mathbf{A}_2^T,\mathbf{A}_3^T,\mathbf{A}_4^T]$. Given this notation, we need to compute the following $4$ uncoded components.
\begin{equation}
\mathbf{Ax}=\begin{bmatrix}
\mathbf{A}_1\mathbf{x}\\
\mathbf{A}_2\mathbf{x}\\
\mathbf{A}_3\mathbf{x}\\
\mathbf{A}_4\mathbf{x}
\end{bmatrix}
\end{equation}
Based on the definition of $1$-diagonal code, each worker stores the following submatrices,
\begin{equation}
\tilde{\mathbf{A}}_1=\mathbf{A}_1,\text{ }\tilde{\mathbf{A}}_2=\mathbf{A}_1+\mathbf{A}_2,\text{ }\tilde{\mathbf{A}}_3=\mathbf{A}_2+\mathbf{A}_3,\text{ }\tilde{\mathbf{A}}_4=\mathbf{A}_3+\mathbf{A}_4,\text{ }\tilde{\mathbf{A}}_5=\mathbf{A}_4.
\end{equation}
Suppose that the first worker is straggler and master node receives results from worker $\{2,3,4,5\}$. According to the above coded computation strategy, we have
\begin{equation}
\begin{bmatrix}
\tilde{\mathbf{y}}_2\\
\tilde{\mathbf{y}}_3\\
\tilde{\mathbf{y}}_4\\
\tilde{\mathbf{y}}_5
\end{bmatrix}=\begin{bmatrix}
1 & 1 & 0 & 0\\
0 & 1 & 1 & 0\\
0 & 0 & 1 & 1\\
0 & 0 & 0 & 1\\
\end{bmatrix}\begin{bmatrix}
\mathbf{A}_1\mathbf{x}\\
\mathbf{A}_2\mathbf{x}\\
\mathbf{A}_3\mathbf{x}\\
\mathbf{A}_4\mathbf{x}
\end{bmatrix}\quad\overset{\text{inverse}}{\Longrightarrow}\quad\begin{bmatrix}
\mathbf{A}_1\mathbf{x}\\
\mathbf{A}_2\mathbf{x}\\
\mathbf{A}_3\mathbf{x}\\
\mathbf{A}_4\mathbf{x}
\end{bmatrix}=\begin{bmatrix}
1 & 0 & 0 & 0\\
-1 & 1 & 0 & 0\\
0 & 0 & 1 & -1\\
0 & 0 & 0 & 1\\
\end{bmatrix}\begin{bmatrix}
\tilde{\mathbf{y}}_2\\
\tilde{\mathbf{y}}_3\\
\tilde{\mathbf{y}}_4\\
\tilde{\mathbf{y}}_5
\end{bmatrix}
\end{equation}
The coefficient matrix is an upper diagonal matrix, which is invertible since the elements in the main diagonal are nonzero. Then we can recover the uncoded components $\{\mathbf{A}_i\mathbf{x}\}$ by direct inversion of the above coefficient matrix. The decodability for the other $4$ possible scenarios can be proved similarly. Therefore, this code achieves the optimum recovery threshold of $4$. Besides, the total number of partial linear transforms is $8$, which also matches the lower bound of computation load.

\subsection{Optimality of Diagonal Code\label{sec:dc}}

The optimality of computation load of the $s$-diagonal code can be easily obtained by counting the number of nonzero elements of coding matrix $\mathbf{M}$. In terms of recovery threshold, we have the following result of the diagonal code
\begin{theorem}
	Let finite set $S$ satisfy $|S|\geq 2n^2C^{n}_m$, then the $s-$diagonal code achieves the recovery threshold of $n$. 
\end{theorem}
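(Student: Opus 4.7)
The plan is to prove the claim by the probabilistic method: show that with positive probability over the random choice of coefficients $m_{ij}\in S$, every $n\times n$ submatrix of $\mathbf{M}$ obtained by keeping $n$ of its rows is invertible. If such a coefficient assignment exists then, for any set $I\subseteq[m]$ of $n$ returning workers, the master can invert the corresponding submatrix and recover $\mathbf{A}_1\mathbf{x},\ldots,\mathbf{A}_n\mathbf{x}$, giving recovery threshold $n$ and matching the lower bound of Theorem~\ref{thm:lowerbound_recovery}.

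Fix $I=\{i_1<i_2<\cdots<i_n\}\subseteq[m]$ and let $\mathbf{M}_I$ be the $n\times n$ submatrix of $\mathbf{M}$ on these rows. Regard its determinant as a multivariate polynomial of total degree $n$ in the variables $\{m_{ij}\}$:
\begin{equation*}
P_I=\sum_{\sigma}\mathrm{sgn}(\sigma)\prod_{k=1}^{n}m_{i_k,\sigma(k)},
\end{equation*}
where the sum is over all permutations $\sigma$ of $[n]$. The crux is to show that $P_I\not\equiv 0$, which I would do by exhibiting one non-cancelling monomial, namely the one coming from $\sigma=\mathrm{id}$. For $\mathrm{id}$ to be a valid assignment I need $\max\{1,i_k-s\}\le k\le \min\{i_k,n\}$ for every $k$. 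The upper bounds $k\le n$ and $k\le i_k$ are immediate (the latter because $i_k$ is the $k$-th smallest element of an $n$-set). For the lower bound $k\ge i_k-s$, observe that exactly $n-k$ elements of $I$ exceed $i_k$, so $m-i_k\ge n-k$, i.e.\ $i_k\le k+(m-n)\le k+s$ in the natural regime $m\le n+s$ that makes every row of $\mathbf{M}$ nonzero. Hence the monomial $\prod_{k=1}^n m_{i_k,k}$ appears in $P_I$ with coefficient $\pm 1$. It cannot be cancelled by any other permutation, because distinct permutations correspond to distinct sets of variable-index pairs $\{(i_k,\sigma(k))\}$ (the row labels $i_k$ are fixed and distinct, so each pair is determined by its first coordinate), hence to distinct monomials.

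With $P_I\not\equiv 0$ in hand, the Schwartz-Zippel lemma gives $\Pr[P_I(\{m_{ij}\})=0]\le n/|S|$ when the $m_{ij}$ are drawn i.i.d.\ uniformly from $S$. Union-bounding over the $\binom{m}{n}$ choices of $I$,
\begin{equation*}
\Pr\bigl[\exists\, I:\det(\mathbf{M}_I)=0\bigr]\le \frac{n\binom{m}{n}}{|S|}\le \frac{1}{2n}<1
\end{equation*}
whenever $|S|\ge 2n^2\binom{m}{n}$. Therefore a valid coefficient assignment exists (and, in fact, a uniformly random one succeeds with probability at least $1-1/(2n)$), completing the argument.

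I expect the main obstacle to be the structural matching step, i.e.\ verifying that the identity permutation always lies inside the diagonal band for every $n$-row subset; everything else is routine Schwartz-Zippel plus a union bound. This step is tight: it relies on the count $i_k\le k+s$, which is exactly where the parameter relation $m\le n+s$ enters and explains why the $s$-diagonal shape is precisely the right band-width for simultaneously attaining recovery threshold $n$ and the optimum computation load $n(s+1)$ established in Theorem~\ref{thm:lowerbound_load}.
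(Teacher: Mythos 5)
Your proof is correct, and it takes a genuinely different (and somewhat more elementary) route than the paper's. The paper introduces the Edmonds matrix and the bipartite graph $G^D(V_1,V_2)$, reduces the non-vanishing of $\det(\mathbf{M}^U)$ as a polynomial to the existence of a perfect matching in the induced subgraph $G^D(U,V_2)$, and then proves that matching exists via Hall's marriage theorem (Lemma~\ref{lm:dc_matching}). You bypass all of that machinery and show $P_I\not\equiv 0$ directly by exhibiting a specific non-cancelling monomial — the one coming from the identity permutation — after verifying that the band constraint $\max\{1,i_k-s\}\le k\le\min\{i_k,n\}$ holds for all $k$ under the implicit regime $m\le n+s$ (which is forced by the code construction: otherwise some row would have an empty support). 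Of course the identity permutation is nothing other than an explicit perfect matching in $G^D(U,V_2)$, so the two proofs are morally the same object, but your version is constructive rather than existential, avoids invoking Hall's theorem, and as a bonus uses the correct total degree $n$ (rather than the $n^2$ the paper plugs into Schwartz--Zippel), giving the sharper failure probability $1/(2n)$ instead of $1/2$. One small point worth making explicit in your write-up: the bound $i_k\le k+s$ needs the hypothesis $m\le n+s$, which you flag; the paper never states it but implicitly relies on it just the same.
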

For each subset $U\subseteq[m]$ with $|U|=n$, let $\mathbf{M}^U$ be a $n\times n$ submatrices consisting rows of $\mathbf{M}$ index by $U$. To prove $s$-diagonal code achieves the optimum recovery threshold of $n$, we need to show all the $n\times n$ submatrices $\mathbf{M}^U$ are full rank. We first define the following bipartite graph model between the set of $m$ workers $[m]$ and the set of $n$ data partitions $[n]$.
\begin{definition}\label{def:dc_graph}
	Let $G^D(V_1,V_2)$ be a bipartite graph with $V_1=[m]$ and $V_2=[n]$. Each node $i\in V_2$ is connected to nodes $\{i,i+1,\ldots,i+s\}\subseteq V_1$. Define an Edmonds matrix $\mathbf{M}(\mathbf{x})\in\mathbb{R}^{m\times n}$ of graph $G^D(V_1,V_2)$ with $[\mathbf{M}(\mathbf{x})]_{ij}=x_{ij}$ if nodes $i\in V_1$ and $j\in V_2$ are connected; $[\mathbf{M}(x)]_{ij}=0$, otherwise. 
\end{definition}
Based on the above definition, the coding matrix $\mathbf{M}$ of $s$-diagonal code can be obtained by assigning each intermediate
$x_{ij}$ of Edmonds matrix $\mathbf{M}(\mathbf{x})$ a value from set $S$ independently and uniformly at random.  Given the subset $U\subseteq[m]$ with $|U|=n$, define $G^D(U,V_2)$ as a subgraph of $G^D(V_1,V_2)$ and $\mathbf{M}^U(\mathbf{x})$ as the corresponding Edmonds matrix. Then the probability that matrix $\mathbf{M}^U$ is full rank is equal to the probability that the determinant of the Edmonds matrix $\mathbf{M}^U(\mathbf{x})$ is nonzero at the given value $\mathbf{x}$. The following technical lemma~\cite{schwartz1980fast} provides a simple lower bound of the such an event.
\begin{lemma} \emph{(Schwartz-Zeppel Lemma)} Let $f(x_1,\ldots,x_{n^2})$ be a nonzero polynomial with degree $n^2$. Let $S$ be a finite set in $\mathbb{R}$. If we assign each variable a value from $S$ independently and uniformly at random, then
	\begin{equation}
	\mathbb{P}(f(x_1,x_2,\ldots,x_N)\neq 0)\geq 1- n^2/|S|.
	\end{equation}
\end{lemma}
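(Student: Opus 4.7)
The plan is to prove the Schwartz--Zippel lemma by induction on the number of variables $N$, following the standard recursive decomposition. Let $d$ denote the total degree of $f$; in the statement at hand $d = n^2$, but treating $d$ as a free parameter makes the induction clean, and at the end I will specialize.

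For the base case $N=1$, $f$ is a nonzero univariate polynomial of degree at most $d$ in $x_1$, so by the fundamental theorem of algebra it has at most $d$ roots in $\mathbb{R}$. Since $x_1$ is drawn uniformly from $S$, the probability that $f(x_1)=0$ is at most $d/|S|$, which gives the claim.

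For the inductive step, I would write $f$ as a polynomial in the last variable with coefficients in the remaining variables,
\begin{equation*}
f(x_1,\ldots,x_N) \;=\; \sum_{i=0}^{k} x_N^{\,i}\, f_i(x_1,\ldots,x_{N-1}),
\end{equation*}
where $k$ is the largest power of $x_N$ that appears in $f$, so that $f_k \not\equiv 0$. Note $\deg f_k \le d-k$ by the definition of total degree. The evaluation $f(x_1,\ldots,x_N)=0$ entails one of two events: either $f_k(x_1,\ldots,x_{N-1}) = 0$ already at the random assignment to the first $N-1$ variables, or $f_k(x_1,\ldots,x_{N-1}) \ne 0$ but the resulting univariate polynomial in $x_N$ (of degree exactly $k$) vanishes at the randomly chosen $x_N$. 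By the inductive hypothesis applied to $f_k$, the first event has probability at most $(d-k)/|S|$; by the base case applied to the univariate polynomial in $x_N$ conditional on a fixed nonzero leading coefficient, the second event has conditional probability at most $k/|S|$. A union bound then yields
\begin{equation*}
\mathbb{P}\bigl(f(x_1,\ldots,x_N)=0\bigr) \;\le\; \frac{d-k}{|S|} + \frac{k}{|S|} \;=\; \frac{d}{|S|}.
\end{equation*}
Specializing to $d=n^2$ completes the proof.

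The only subtle point, and the step I would take most care over, is the conditional probability argument: I need to argue that conditioning on the event $f_k(x_1,\ldots,x_{N-1}) \ne 0$ still leaves $x_N$ uniformly distributed on $S$ and independent of the revealed coefficients, so that bounding the number of roots in $x_N$ by $k$ is legitimate. This follows because $x_N$ is drawn independently of $x_1,\ldots,x_{N-1}$ in the model, so the marginal law of $x_N$ given any event measurable with respect to $x_1,\ldots,x_{N-1}$ remains uniform on $S$. Once this is in place, the induction goes through cleanly and the constant $d/|S| = n^2/|S|$ matches the bound in the lemma statement.
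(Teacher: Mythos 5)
Your proof is the standard inductive argument for the Schwartz--Zippel lemma and it is correct, including the careful handling of the conditioning step via independence of $x_N$ from the first $N-1$ coordinates. Note, however, that the paper does not prove this lemma at all: it is cited directly from Schwartz's 1980 paper as a black-box tool, so there is no in-paper proof to compare against. Your write-up would serve as a self-contained substitute for that citation. One small formal point worth making explicit if you include this: the inductive hypothesis must be stated for all degrees simultaneously (``every nonzero polynomial in $N-1$ variables of degree $d'$ vanishes with probability at most $d'/|S|$''), since you invoke it for $f_k$ at degree $d-k$ rather than $d$; you gesture at this by ``treating $d$ as a free parameter,'' which is the right idea, but spelling out the quantifier avoids any appearance of circularity.
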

A classic result in graph theory is that a bipartite graph $G^D(U,V_2)$ contains a perfect matching if and only if the determinant of Edmonds matrix, i.e., $|\mathbf{M}^U(\mathbf{x})|$, is a nonzero polynomial. Combining this result with Schwartz-Zeppel Lemma, we can finally reduce the analysis of the full rank probability of the submatrix $\mathbf{M}^U$ to the probability that the subgraph $G^D(U,V_2)$ contains a perfect matching.
\begin{equation}\label{pf:eq:rank_matching}
\mathbb{P}(|\mathbf{M}^U|\neq 0)=\underbrace{\mathbb{P}(|\mathbf{M}^U|\neq 0\big||\mathbf{M}^U(x)| \not\equiv​0)}_{\text{ S-Z Lemma: }\geq 1-1/2C^{n}_m}\cdot\underbrace{\mathbb{P}(|\mathbf{M}^U(x)|\not\equiv​0)}_{\text{$G^D(U,V_2)$ contains perfect matching}}
\end{equation}
The next technical lemma shows that for all subsets $U\subseteq [m]$, the subgraph $G^D(U,V_2)$ exactly contains a perfect matching. Then utilizing the union bound, we conclude that, with probability at least $1/2$, all the $n\times n$ submatrices of $\mathbf{M}$ are full rank. Since we can generate the coding matrix $\mathbf{M}$ offline, with a few
rounds of trials ($2$ in average), we can find a coding matrix with all $n\times n$ submatrices being full rank.
\begin{lemma}\label{lm:dc_matching}
	Construct bipartite graph $G^D(V_1,V_2)$ from Definition~\ref{def:dc_graph}. For each $U\subseteq [m]$ with $|U|=n$, the subgraph $G^D(U,V_2)$ contains a perfect matching.
\end{lemma}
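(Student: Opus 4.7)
The plan is to invoke Hall's marriage theorem. Since $|U|=|V_2|=n$, showing that $G^D(U,V_2)$ admits a perfect matching is equivalent to showing Hall's condition $|N(T)\cap U|\geq|T|$ for every $T\subseteq V_2$. Rather than verifying Hall's condition set-by-set, I would exhibit a perfect matching explicitly, which is cleaner given the rigid diagonal adjacency pattern.

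First, I would order the elements of $U$ increasingly as $u_1<u_2<\cdots<u_n$. Two elementary counting bounds then localize each $u_k$: since the $u_k$ are distinct positive integers, $u_k\geq k$; and since $U\subseteq[m]$ with $n-k$ elements of $U$ strictly larger than $u_k$, we also get $u_k\leq m-(n-k)$. The diagonal construction in Definition~\ref{def:diagonal} only populates the first $n+s$ rows of $\mathbf{M}$ (rows beyond that have an empty summation range), so effectively $m=n+s$ and the bound becomes $u_k\leq k+s$.

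Combining, $k\leq u_k\leq k+s$ for every $k\in[n]$. By Definition~\ref{def:dc_graph}, the partition $k\in V_2$ is adjacent in $G^D$ exactly to the workers $\{k,k+1,\ldots,k+s\}\cap V_1$. The bounds above say precisely that $u_k$ lies in this window, so the assignment $\sigma:V_2\to U$ defined by $\sigma(k)=u_k$ is a bijection in which every pair $(k,\sigma(k))$ is an edge of $G^D(U,V_2)$, i.e., a perfect matching.

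The argument is essentially bookkeeping once $U$ is sorted, so I do not expect a genuine obstacle. The only place where a little care is needed is the implicit assumption $m=n+s$; if one allowed $m>n+s$ then the upper bound on $u_k$ would degrade and Hall's condition can fail (take $U=\{m-n+1,\ldots,m\}$ and $T=\{1\}$), but because the diagonal construction uses no worker indexed beyond $n+s$, restricting to $m=n+s$ is exactly the regime in which the lemma is stated and used.
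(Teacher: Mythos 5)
Your proof is correct, and it takes a genuinely different route than the paper's. The paper invokes Hall's marriage theorem and verifies Hall's condition abstractly: it argues, for each subset $I\subseteq U$ of worker nodes, that the union of the supports $\bigcup_{i\in I}\text{supp}(\mathbf{M}_i)$ has cardinality at least $|I|$, reducing to the case where consecutive supports overlap and then using the chain structure of the band matrix. You instead exhibit the matching directly: sort $U=\{u_1<\cdots<u_n\}$, derive the sandwich $k\leq u_k\leq k+s$ from the two elementary order-statistics bounds, and observe this places $u_k$ exactly in the adjacency window $\{k,\ldots,k+s\}$ of data node $k$, so $k\mapsto u_k$ is the matching. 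Your version buys transparency (the matching is constructive and the inequalities are one-liners) and makes explicit the hidden hypothesis $m=n+s$, which the paper leaves implicit; as you note, the statement is in fact false for $m>n+s$ since rows beyond $n+s$ of $\mathbf{M}$ are identically zero. The paper's Hall's-condition verification is more verbose and, as written, slightly sloppy (it omits the $+1$ when counting an integer interval and states $i_k-i_1\geq k$ rather than $k-1$), though the underlying idea is sound. Both arguments are a page of bookkeeping; yours is the tidier of the two.
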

The proof is mainly based on Hall's marriage theorem. One special case of $s$-diagonal code is that all the nonzero elements of matrix $\mathbf{M}$ can be equal to $1$, when we are required to resist only one straggler.
\begin{corollary}\label{coro:dc_code}
	Given the parameters $n$ and $m=n+1$, define the $1$-diagonal code: 
	\begin{equation*}
	\tilde{\mathbf{A}}_i=\left\{
	\begin{aligned}
	&\mathbf{A}_1,i=1; \mathbf{A}_n,i=n+1\\
	&\mathbf{A}_{i-1}+\mathbf{A}_i, 2\leq i \leq n
	\end{aligned}
	\right..
	\end{equation*}
	It achieves the optimum computation load of $2n$ and optimum recovery threshold of $n$.
\end{corollary}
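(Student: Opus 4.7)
The plan is to verify the two optimality claims separately. For the computation load, I would simply count: row $1$ and row $n+1$ each contribute one nonzero entry, while rows $2,\ldots,n$ each contribute two nonzero entries, giving $\|\mathbf{M}\|_0 = 1 + 1 + 2(n-1) = 2n$. This matches the lower bound $n(s+1) = 2n$ from Theorem~\ref{thm:lowerbound_load} at $s=1$, so the load is optimum.

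For the recovery threshold, by Definition~\ref{def:recovery} and Theorem~\ref{thm:lowerbound_recovery}, it suffices to show that the master node can always recover $\mathbf{Ax}$ after receiving results from any $n$ out of the $m=n+1$ workers. Equivalently, every $n\times n$ submatrix $\mathbf{M}^U$ of $\mathbf{M}$ obtained by deleting one row (indexed by the straggler $i\in[n+1]$) must be invertible. I would proceed by a short case analysis on $i$.

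If $i=1$, the remaining matrix consists of rows $2,\ldots,n+1$, which is upper triangular with $1$'s on the diagonal; if $i=n+1$, the remaining matrix consists of rows $1,\ldots,n$, which is lower triangular with $1$'s on the diagonal. For $2\leq i\leq n$, the key observation is that columns $1,\ldots,i-1$ have their support contained in rows $1,\ldots,i-1$ (since column $j$'s support is $\{j,j+1\}$ in the original $\mathbf{M}$), so after deleting row $i$ the submatrix has block triangular form
\begin{equation*}
\mathbf{M}^U = \begin{bmatrix} \mathbf{L} & \mathbf{0} \\ \mathbf{0} & \mathbf{U} \end{bmatrix},
\end{equation*}
where $\mathbf{L}$ is the $(i-1)\times(i-1)$ lower triangular block with unit diagonal (from rows $1,\ldots,i-1$ and columns $1,\ldots,i-1$) and $\mathbf{U}$ is the $(n-i+1)\times(n-i+1)$ upper triangular block with unit diagonal (from rows $i+1,\ldots,n+1$ and columns $i,\ldots,n$). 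Hence $|\mathbf{M}^U| = |\mathbf{L}|\cdot|\mathbf{U}| = 1$ in every case, so $\mathbf{M}^U$ is invertible, $\mathbf{Ax}$ is recoverable by direct inversion, and the recovery threshold equals the lower bound $n$.

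There is no real obstacle here; the only subtlety is noticing the block-triangular decomposition for middle-row deletions, which reduces the three cases to essentially one computation and explains why we may take all coefficients equal to $1$ (consistent with the general Theorem~4 requirement that coefficients be chosen from a sufficiently large set, which becomes trivial for $s=1$). Because the determinants are identically $1$, no probabilistic argument \`a la Schwartz--Zeppel is needed for this specific instance, which is why this corollary strengthens the generic construction of Definition~\ref{def:diagonal}.
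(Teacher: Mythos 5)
Your proof is correct and takes essentially the same route as the paper: the paper also deletes one row, observes that the resulting $n\times n$ submatrix is block-diagonal with a lower-triangular block (first $k-1$ rows/columns) and an upper-triangular block (remaining rows/columns), and concludes the determinant is $1$. The paper's version handles $k=1$ and $k=n+1$ implicitly as degenerate blocks, while you spell them out, and you additionally verify the computation-load count, but the substance is the same.
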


The $s$-diagonal code achieves both optimum recovery threshold of $n$ and computation load of $n(s+1)$. This result implies that the the average computation load, i.e., $n(s+1)/(n+s)$, will increase when number of stragglers increases. In the extreme case, the coding matrix $\mathbf{M}$ will degenerate to a fully dense matrix. The reason behind this phenomenon mainly derives from the pessimistic definition of the recovery threshold, which requires the coded computation scheme $\mathbf{M}$ to resist any $s$ stragglers. In the next section, we show that, if we relax requirement in Definition~\ref{def:recovery} to the probabilistic scenario, we can design a code with a computation load independent of the number of stragglers $s$.

\section{Random Code: ``Break'' the Limits}

\subsection{Probabilistic Recovery Threshold}

In practice, the straggles randomly occur in each worker, and the probability that a specific straggling configuration happen is in low probability. We first demonstrate the main idea through the following motivating examples. 

\textbf{Example 2:} Consider a distributed linear transform tasks with $n=20$ data partitions, $s=5$ stragglers and $m=25$ workers. The  recovery threshold of $20$ implies that all $C_{25}^{20}=53130$ square $20\times 20$ submatrices are full rank. Suppose that a worker being a straggler is identically and independently Bernoulli random variable with probability $10\%$. Then, the probability that workers $\{1,2,3,4,5\}$ being straggler is $10^{-5}$. if there exists a scheme that can guarantee the master to decode the results in all configurations except straggling configurations $\{1,2,3,4,5\}$, we can argue that this scheme achieves a recovery threshold of $20$ with probability $1-10^{-5}$.  

\textbf{Example 3:} Consider the same scenario of \textbf{Example 1} ($n=4,s=1,m=5$). We change the coded computation strategy of the second worker from $\tilde{\mathbf{A}}_2=\mathbf{A}_1+\mathbf{A}_2$ to $\tilde{\mathbf{A}}_2=\mathbf{A}_2$. Based on the similar analysis, we can show that the new strategy can recover the final result from $4$ workers except the scenario that first worker being straggler. Therefore, the new strategy achieves recovery threshold of $4$ with probability $0.9$, and reduces the computation load by $1$.

Based on the above two examples, the computation load can be reduced when we allow the coded computation strategy fails in some specific scenarios. Formally, we have the following definition of the probabilistic recovery threshold.
\begin{definition}
	\emph{(Probabilistic recovery threshold)} A coded computation strategy $\mathbf{M}$ is probabilistic $k$-recoverable if for each subset $I\subseteq [m]$ with $|I| = \kappa(\mathbf{M})$, the master node can recover $\mathbf{Ax}$ from $\{\tilde{\mathbf{A}}_i\mathbf{x}|i\in I\}$ with high probability. The probabilistic recovery threshold $\kappa(\mathbf{M})$ is defined as the minimum integer $k$ such that strategy $\mathbf{M}$ is probabilistic $k$-recoverable.
\end{definition}
Instead of guaranteeing the recoverability of all the straggling configurations, new definition provides a probabilistic relaxation such that a small percentage of all the straggling configurations, i.e., vanishes as $n\rightarrow \infty$, are allowed to be unrecoverable. In the sequel, we show that, under such a relaxation, one can construct a coded computation scheme that achieves probabilistic recovery threshold of $n$ and a constant (regarding parameter $s$) computation load.

\subsection{Construction of Random Code}

Based on our previous analysis of the recovery threshold of $s-$diagonal code, we show that, for any subset $U\subseteq [m]$ with $|U|=n$, the probability that $\mathbf{M}^U$ is full rank is lower bounded by the probability (multiplied by $1-o(1)$) that the corresponding subgraph $G(U,V_2)$ contains a perfect matching. This technical path motivates us to utilize the random proposal graph to construct the coded computation scheme. The first one is the following $p$-Bernoulli code, which is constructed from the ER random bipartite graph model~\cite{erdos1964random}.
\begin{definition}
	\emph{($p$-Bernoulli code)} Given parameters $m,n$, construct the coding matrix $\mathbf{M}$ as follows:
	\begin{equation}
	m_{ij}=\left\{
	\begin{aligned}
	&t_{ij},\text{with probability }p\\
	&0,\text{with probability }1-p
	\end{aligned}
	\right..
	\end{equation}
	where $t_{ij}$ is picked independently and uniformly from the finite set $S$.
\end{definition}

\begin{theorem}\label{thm:ber_code}
	For any parameters $m,n$ and $s$, if $p=2\log(n)/n$, the $p$-Bernoulli code achieves the probabilistic recovery threshold of $n$.
\end{theorem}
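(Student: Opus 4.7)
The plan is to mimic the two-step reduction used for the $s$-diagonal code in Section~\ref{sec:dc}, but with the bipartite graph now being random rather than deterministic. Fix an arbitrary subset $U\subseteq[m]$ with $|U|=n$ and study the $n\times n$ random submatrix $\mathbf{M}^U$. Each entry of $\mathbf{M}^U$ is independently zero with probability $1-p$, and otherwise is an independent uniform draw from $S$. Let $G(U,V_2)$ denote the bipartite graph whose edges are exactly the nonzero positions of $\mathbf{M}^U$; this is precisely the Erd\H{o}s--R\'enyi random bipartite graph $G(n,n,p)$. As in equation~(\ref{pf:eq:rank_matching}), I factor
\begin{equation*}
\mathbb{P}(|\mathbf{M}^U|\neq 0)=\mathbb{P}(|\mathbf{M}^U|\neq 0\mid G(U,V_2)\text{ has a perfect matching})\cdot\mathbb{P}(G(U,V_2)\text{ has a perfect matching}).
\end{equation*}
Conditional on the nonzero pattern admitting a perfect matching, the determinant, viewed as a polynomial in the nonzero entries, is a nonzero polynomial of degree $n$, so by Schwartz--Zeppel the first factor is at least $1-n^2/|S|$, which is $1-o(1)$ for $|S|$ polynomially large in $n$.

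The main work is the second factor: showing that $G(n,n,p)$ with $p=2\log(n)/n$ contains a perfect matching with probability $1-o(1)$. I would do this by verifying Hall's condition via a union bound, following the classical Erd\H{o}s--R\'enyi argument. For each side and each $k\le n/2$, the probability that some $k$-subset $S$ has $|N(S)|<k$ is at most
\begin{equation*}
\binom{n}{k}\binom{n}{n-k+1}(1-p)^{k(n-k+1)}.
\end{equation*}
Substituting $p=2\log(n)/n$, using $1-p\le e^{-p}$ and splitting into small sets ($k\le \log n$) and medium sets ($\log n<k\le n/2$), the resulting sums are each $o(1)$; a symmetric bound handles the other side. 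Thus $G(n,n,p)$ has a perfect matching w.h.p., and combining the two factors yields $\mathbb{P}(|\mathbf{M}^U|\neq 0)=1-o(1)$ for the given subset $U$.

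Finally, since the definition of probabilistic recovery threshold only demands that for each fixed $I\subseteq[m]$ with $|I|=n$ the decoding succeed with high probability (rather than simultaneously over all such $I$), the above single-subset analysis is exactly what is needed, and the theorem follows.

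The hard part will be the Hall-condition union bound: one has to split the sum over $k$ carefully because the naive bound on $\binom{n}{k}$ is too weak for small $k$ and one must use $\binom{n}{k}\le (en/k)^k$ in that regime while using $\binom{n}{k}\le 2^n$ for medium $k$. The rest---Schwartz--Zeppel, the factorization into a pattern part and a value part, and the observation that the probabilistic threshold is \emph{per subset}---is essentially bookkeeping once the matching result is in hand.
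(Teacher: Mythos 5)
Your proposal is correct and takes essentially the same two-step route as the paper: factor $\mathbb{P}(|\mathbf{M}^U|\neq 0)$ through the perfect-matching event via Schwartz--Zippel, then verify Hall's condition for the Erd\H{o}s--R\'enyi bipartite graph $G(n,n,p)$ at $p=2\log(n)/n$ by a union bound over violating sets. The only difference is that the paper's union bound is slightly sharper: it first passes to a \emph{minimal} Hall violator (its Lemma~\ref{lm:halltheorem}), which forces every node of $N(S)$ to have at least two neighbors in $S$ and so contributes an additional $\binom{k}{2}^{k-1}p^{2(k-1)}$ factor, whereas your classical Erd\H{o}s--R\'enyi bound $\binom{n}{k}\binom{n}{n-k+1}(1-p)^{k(n-k+1)}$ also sums to $o(1)$ at this value of $p$ after the small-$k$/medium-$k$ split you describe.
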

This result implies that each worker of $p$-Bernoulli code requires accessing $2\log(n)$ submatrices in average, which is independent of number of stragglers. Note that the existing works in distributed functional computation~\cite{lee2017coded} proposes a random sparse code that also utilizes Bernoulli random variable to construct the coding matrix. There exist two key differences: (i) the elements of our matrix can be integer valued, while the random sparse code adopts real-valued matrix; (ii) the density of $p$-Bernoulli code is $2\log(n)$, while the density of the random sparse code is an unknown constant.

The second random code is the following $(d_1,d_2)$-cross code.
\begin{definition}
	\emph{($(d_1,d_2)$-cross code)} Given parameters $m,n$, construct the coding matrix $\mathbf{M}$ as follows: (1) Each row (column) independently and uniformly choose $d_1$ ($d_2$) nonzero positions; (2) For those nonzero positions, assign the value independently and uniformly from the finite set $S$.
\end{definition}
It is easy to see that the computation load of $(d_1,d_2)$-cross code is upper bounded by $d_1m+d_2n$. The next theorem shows that a constant choice of $d_1$ and $d_2$ can guarantee the probabilistic recovery threshold of $n$.
\begin{theorem}\label{thm:cross_code}
	For any parameters $m,n$ and $s$, if $s=\text{poly}(\log(n))$, the $(2,3)$-cross code achieves the probabilistic recovery threshold of $n$. If  $s=\Theta(n^{\alpha})$, $\alpha<1$, the $(2,2/(1-\alpha))$-cross code achieves the probabilistic recovery threshold of $n$.
\end{theorem}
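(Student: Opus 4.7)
The plan is to reuse, almost verbatim, the reduction framework from Section~\ref{sec:dc}. Associate to the $(d_1,d_2)$-cross code the bipartite graph $G(V_1,V_2)$ with $V_1=[m]$, $V_2=[n]$ and an edge $(i,j)$ whenever $m_{ij}\neq 0$, together with the Edmonds matrix $\mathbf{M}(\mathbf{x})$. By the chain of implications in (\ref{pf:eq:rank_matching})---$G(U,V_2)$ contains a perfect matching iff $\det\mathbf{M}^U(\mathbf{x})\not\equiv 0$, and Schwartz--Zeppel then lifts this to $\det\mathbf{M}^U\neq 0$ with probability at least $1-n^2/|S|$---full rank of $\mathbf{M}^U$ follows from $G(U,V_2)$ having a perfect matching. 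Taking $|S|$ large enough absorbs the $\binom{m}{n}$ cost of a union bound over $U$ exactly as in Theorem~\ref{thm:ber_code}, so it suffices to show that, for any fixed $U$ with $|U|=n$, $G(U,V_2)$ has a perfect matching with probability $1-o(1)$.

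The matching question is handled via Hall's theorem and a union bound over potential violators. Let $H_k$ be the event that there exist $S\subseteq V_2$ with $|S|=k$ and a witness $T\subseteq U$ with $|T|=k-1$ such that $N_U(S)\subseteq T$. Since the row-selection and column-selection edges are drawn independently, the two events ``every column $j\in S$ picks all $d_2$ of its row-edges in $T\cup(V_1\setminus U)$'' and ``every row $i\in U\setminus T$ picks all $d_1$ of its column-edges in $V_2\setminus S$'' multiply, yielding
\begin{equation}
\mathbb{P}(H_k)\leq \binom{n}{k}\binom{n}{k-1}\left[\frac{\binom{k-1+s}{d_2}}{\binom{m}{d_2}}\right]^{k}\left[\frac{\binom{n-k}{d_1}}{\binom{n}{d_1}}\right]^{n-k+1}.
\end{equation}
This is the core inequality; the rest of the proof is to check that the parameter choices in the theorem make $\sum_{k=1}^{n}\mathbb{P}(H_k)=o(1)$.

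For $s=\text{poly}(\log n)$ and $(d_1,d_2)=(2,3)$, the $k=1$ term is at most $n(s/m)^{3}((n-2)/n)^{2n}\leq \text{polylog}(n)/n^{2}$; using $\binom{n}{k}\leq(en/k)^{k}$ and $((n-k)/n)^{d_1 n}\leq e^{-d_1 k}$, the tail shrinks geometrically in $k$ so the sum is $o(1)$. For $s=\Theta(n^{\alpha})$ with $\alpha<1$ and $d_2=2/(1-\alpha)$, one has $(s/m)^{d_2}=n^{-d_2(1-\alpha)}=n^{-2}$, so the $k=1$ term is $O(1/n)$ and the subsequent terms again supply geometric decay. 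For $k$ close to $n$, a Hall violation on the $V_2$ side is equivalent by complementation to a Hall violation of size $n-k+1$ on the row side within $U$; the same inequality applies with the roles of $d_1,d_2$ and of $s,0$ interchanged, and the analogous small-$k$ argument kills it because $d_1\geq 2$ forces $\binom{k-1}{d_1}=0$ for $k\leq d_1$.

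The main obstacle is the middle regime $k=\Theta(n)$, where the $\binom{n}{k}\binom{n}{k-1}$ prefactor can be as large as $4^{n}$ and neither the column factor $[(k+s)/m]^{d_2 k}$ nor the row factor $[(n-k)/n]^{d_1(n-k+1)}$ alone is decisive. The plan is to combine both and exploit the fact that, inside the allowed host set of size $k-1+s$, the graph must accommodate $d_2 k$ column-edges while simultaneously no row-edge from the $n-k+1$ rows of $U\setminus T$ may cross into $S$---a much stronger constraint than either factor captures individually. Turning this density heuristic into tail estimates that exactly match the parameter choices $(2,3)$ and $(2,2/(1-\alpha))$, rather than requiring a slightly larger $d_2$, is the technically delicate part of the proof.
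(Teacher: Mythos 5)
Your setup is on target: the Edmonds-matrix / Schwartz--Zippel reduction to perfect matching is exactly the paper's, and your union-bound inequality for $\mathbb{P}(H_k)$ is, up to the reindexing $k\mapsto n-k+1$, precisely $\binom{n}{k}\binom{n}{n-k+1}\beta(k)$ from the paper's estimates (\ref{eq:org_beta})--(\ref{eq:beta}). But the proof stops where the work actually begins. You flag the middle regime $k=\Theta(n)$ as the "technically delicate part" and leave it as a "plan," and the idea you offer for closing it---that inside the host set of size $k-1+s$ the graph must ``accommodate $d_2 k$ column-edges while simultaneously no row-edge may cross into $S$, a much stronger constraint than either factor captures individually''---does not correspond to any slack actually left in your bound on $\mathbb{P}(H_k)$. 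The row-edge draws and the column-edge draws are independent, so the product of the two probabilities already captures the \emph{joint} blocking event exactly; there is no additional ``density'' constraint to extract from the bare blocking-pair union bound.

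The paper closes the gap by a genuinely different step: it refines ``blocking pair'' to ``forbidden pair'' (Definition~\ref{def:blocking}, following Walkup's $2$-out bipartite-matching argument), shows in Lemma~\ref{lm:fb_match} that any Hall violation forces the existence of a forbidden pair (a strictly rarer event), and thereby multiplies your $\beta(k)$ by the extra conditional factor $\alpha(k)$. The estimate (\ref{pf:eq:est_alpha}) gives $\alpha(k)\lesssim\bigl[1-c_6((k-d_1-1)/(k-1))^k\bigr]^{k-1}$, which decays geometrically in $k$. With that in hand, the paper only needs the crude uniform bound $\binom{n}{k}\binom{n}{n-k+1}\beta(k)\lesssim c/n$ (obtained after the monotonicity observation in (\ref{pf:eq:keyest})), because summing the geometric $\alpha(k)$ turns $\sum_k \gamma(k)\alpha(k)$ into $O(1/n)$. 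Your proposal has no analogue of the forbidden-pair refinement, so the summation over $k$---the very step you identify as the hardest---is left unbounded, and the argument as written does not establish the theorem.
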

The proof of this theorem is based on analyzing the existence of perfect matching in a random bipartite graph constructed as following: (i) each node in the left partition randomly and uniformly connects to $d_1$ nodes in the opposite class; (ii) each node in the right partition randomly and uniformly connects to $l$ nodes in the opposite class, where $l$ is chosen under a specific degree distribution. This random graph model can be regarded as a generalization of Walkup's $2$-out bipartite graph model~\cite{walkup1980matchings}. The main technical difficulty in this case derives from the intrinsic complicated statistical model of node degree of right partition.

\begin{remark}
	One straightforward construction is using the LDPC codes~\cite{luby2001efficient} or rateless codes such as LT and raptor code~\cite{luby2002lt}. These codes can not only guarantee the probabilistic recovery threshold of $\Theta(n)$, but also provide a low computation load, i.e., $O(n\log(1/\epsilon))$ from raptor code. However, the desired performance of these codes requires the $n$ sufficiently large, i.e., $n>10^5$, which is impossible in distributed computation. For example, when $n=20$, the recovery threshold of LT code  is roughly equal to $30$, which is much larger than the lower bound.
\end{remark}

\subsection{Numerical Results of Random Code}

\begin{figure}[t]
	\vskip -0.05in
	\begin{center}
		\centerline{\includegraphics[width=7.0in]{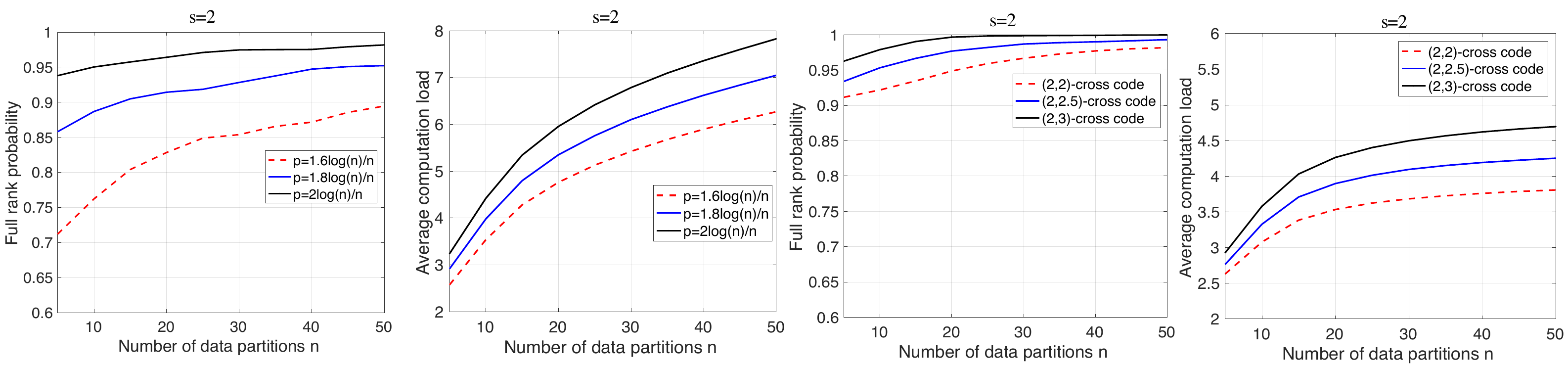}}
		\vskip -0.1in
		\caption{Statistical convergence speed of full rank probability and average computation load of random code under number of stragglers $s=2$.}
		\vskip -0.4 in
		\label{fig:numerical}
	\end{center}
\end{figure}

\begin{figure}[t]
	\vskip 0.05in
	\begin{center}
		\centerline{\includegraphics[width=7.0in]{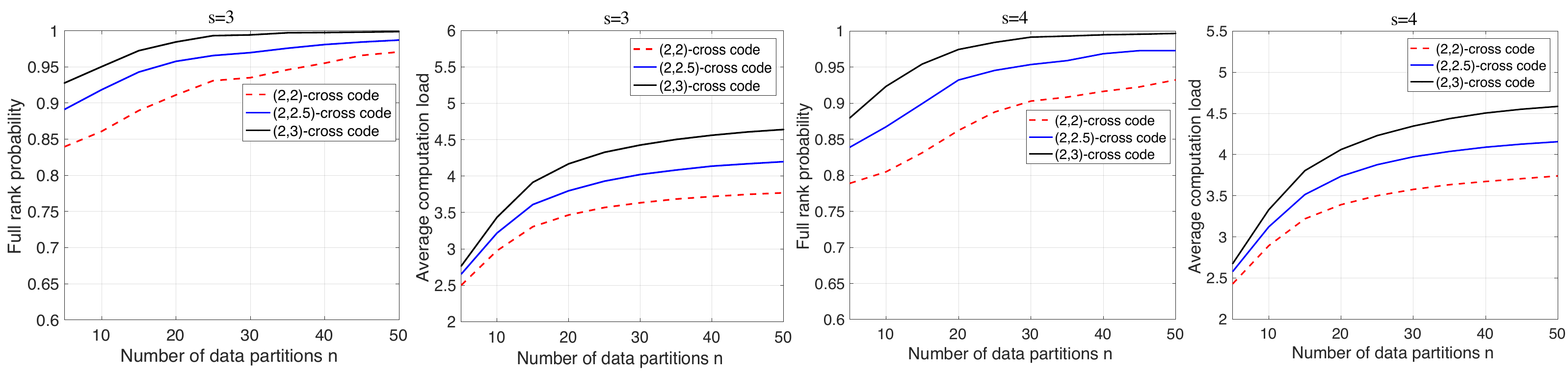}}
		\vskip -0.1in
		\caption{Statistical convergence speed of full rank probability and average computation load of $(d_1,d_2)$-cross code under number of stragglers $s=3,4$.}
		\vskip -0.4in
		\label{fig:numerical}
	\end{center}
\end{figure}

We examine the performance of proposed $p$-Bernoulli code and $(d_1,d_2)$-cross code in terms of convergence speed of full rank probability and computation load. In Fig.~\ref{fig:numerical}, we plot the percentage of full rank $n\times n$ square submatrix and the average computation load $l(\mathbf{M})/m$ of each scheme, based on $1000$ experimental run. Each column of the $(2,2.5)$-code  independently and randomly chooses $2$ or $3$ nonzero elements with equal probability. It can be observed that the full rank probability of both $p$-Bernoulli code and $(d_1,d_2)$-cross code converges to $1$ quite fast. The $(d_1,d_2)$-cross code exhibits even faster convergence and much less computation load compared to the $p$-Bernoulli code. For example, when $n=20,s=4$, the $(2,2)$-cross code achieves the full rank probability of $0.86$ and average computation load $3.4$. This provides the evidence that $(d_1,d_2)$-cross code is useful in practice. Moreover, in the practical use of random code, one can run multiple rounds of trails to find a ``best'' coding matrix with even higher full rank probability and lower computation load.

\section{Fast Decoding Algorithm}

The original decoding algorithm of both $s$-diagonal code and our constructed random code is based on inverting the coding matrix. Although the inverse of such a matrix can be found efficiently offline, the mapping from $[\mathbf{\tilde{y}}_{i_1},\mathbf{\tilde{y}}_{i_2},\ldots,\mathbf{\tilde{y}}_{i_n}]$ to result vector $\mathbf{y}$ incurs a complexity of $O(nr)$, which is large when number of data partitions $n$ and row dimension $r$ are large. We next show this can be further reduced by a hybrid decoding algorithm between the peeling decoding and Gaussian elimination. The main procedure is listed in Algorithm~\ref{alg:spcode}. The master node first finds a worker (ripple) that computes an uncoded tasks, i.e., $\mathbf{A}_i\mathbf{x}$, and recovers that block. Then for each collected results $\mathbf{\tilde{y}}_i$, it subtracts this block if the computation task $\mathbf{\tilde{A}}_i\mathbf{x}$ contains this block. If there exists no ripple in our decoding process, we go to a \textbf{rooting step}: randomly pick a particular block $\mathbf{A}_i\mathbf{x}$, and recover this block via a linear combination of the results $\{\mathbf{\tilde{y}}_i\}_{i=1}^{n}$.
\begin{lemma} \emph{(rooting step)}\label{lm:rooting}
	If rank$(\mathbf{M}^U)=n$, then for any $k_0\in\{1,2,\ldots,n\}$, we can recover a particular block $\mathbf{A}_i\mathbf{x}$ with column index $k_0$ in matrix $\mathbf{M}^U$ via the following linear combination.
	\begin{equation}\label{eq:borrow}
	\mathbf{A}_i\mathbf{x}=\sum\nolimits_{k=1}^n u_{k} \mathbf{\tilde{y}}_k.
	\end{equation}
	The vector $u=[u_1,\ldots,u_n]$ can be determined by solving $\mathbf{M}^{\intercal}\mathbf{u}=\mathbf{e}_{k_0}$, where $\mathbf{e}_{k_0}\in\mathbb{R}^{n}$ is a unit vector with unique $1$ locating at the index $k_0$. 
\end{lemma}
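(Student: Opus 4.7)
The plan is to read off the linear combination directly from the coding identity. First I would write out each collected result as $\tilde{\mathbf{y}}_k = \tilde{\mathbf{A}}_k \mathbf{x} = \sum_{j=1}^{n} m^U_{kj} \mathbf{A}_j \mathbf{x}$ for $k \in \{1,\dots,n\}$, where $m^U_{kj}$ denotes the entries of the square submatrix $\mathbf{M}^U$ restricted to the rows of the $n$ returning workers. Taking the linear combination in (\ref{eq:borrow}) and swapping the order of summation gives
\begin{equation*}
\sum_{k=1}^{n} u_k \tilde{\mathbf{y}}_k = \sum_{j=1}^{n} \Bigl(\sum_{k=1}^{n} u_k m^U_{kj}\Bigr) \mathbf{A}_j \mathbf{x} = \sum_{j=1}^{n} \bigl[(\mathbf{M}^U)^{\intercal} \mathbf{u}\bigr]_j \, \mathbf{A}_j \mathbf{x}.
\end{equation*}

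Next I would observe that in order to isolate $\mathbf{A}_{k_0}\mathbf{x}$ on the right-hand side as an algebraic identity in the underlying blocks, it suffices to choose $\mathbf{u}$ so that the coefficient vector equals $\mathbf{e}_{k_0}$, i.e., to impose $(\mathbf{M}^U)^{\intercal} \mathbf{u} = \mathbf{e}_{k_0}$. This is precisely the linear system given in the statement (with $\mathbf{M}$ there understood as $\mathbf{M}^U$, the relevant $n\times n$ submatrix).

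Finally I would invoke the hypothesis that $\mathrm{rank}(\mathbf{M}^U)=n$, which implies that $(\mathbf{M}^U)^{\intercal}$ is also an invertible $n\times n$ matrix. Hence $\mathbf{u}=((\mathbf{M}^U)^{\intercal})^{-1}\mathbf{e}_{k_0}$ exists and is unique, and substituting back into the displayed identity yields $\sum_{k=1}^{n} u_k \tilde{\mathbf{y}}_k = \mathbf{A}_{k_0}\mathbf{x}$, as claimed. There is no real obstacle here; the lemma is essentially a restatement of the fact that an invertible coefficient matrix lets one solve for any single coordinate of the decoded vector as an explicit linear combination of the observed coded outputs, and the only thing worth noting for the accompanying decoding algorithm is that $\mathbf{u}$ depends only on $\mathbf{M}^U$ and $k_0$, so it can be precomputed offline once the identity of the returning workers is known.
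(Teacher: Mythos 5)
Your proof is correct and formalizes exactly the paper's stated intuition (``find a linear combination of row vectors of $\mathbf{M}^U$ such that the row vectors eliminate all other blocks except the particular block''); the paper itself gives no formal proof of this lemma, only that one-sentence sketch, and your argument---expanding $\tilde{\mathbf{y}}_k$ via the coding identity, swapping sums to read off the coefficient vector as $(\mathbf{M}^U)^{\intercal}\mathbf{u}$, imposing $(\mathbf{M}^U)^{\intercal}\mathbf{u}=\mathbf{e}_{k_0}$, and invoking full rank for solvability---is the natural and complete realization of it.
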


The basic intuition is to find a linear combination of row vectors of matrix $\mathbf{M}^U$ such that the row vectors eliminate all other blocks except the particular block $\mathbf{A}_i\mathbf{x}$.  Here we analyze the complexity of hybrid decoding algorithm. During each iteration, the complexity of operation $\mathbf{\tilde{y}}_{j}=\mathbf{\tilde{y}}_{j}-m_{ji}\mathbf{A}_i\mathbf{x}$ is $O(r/n)$,  and the complexity in each rooting step (\ref{eq:borrow}) is $O(r)$. Suppose there are number of $c$ blocks recovered from rooting step, $n-c$ blocks are recovered from peeling decoder. Further, each block $\mathbf{A}_i\mathbf{x}$ will be used $d_i$ times during the peeling decoding, where $d_i$ is the average density of column $\mathbf{M}_i$. Therefore, the total complexity is $O(r\bar{d}+rc(1-\bar{d}/n))$, where $\bar{d}=l(\mathbf{M})/n$. One can easily observe that the complexity of hybrid decoding algorithm is strictly less than the complexity of original inverse decoding $O(rn)$. In practice, we observe that number of recovery steps $c$ is constant for $n\leq10^3$, which implies that our hybrid decoding algorithm is nearly linear time for the above random codes.

\begin{algorithm}[tb]
	\caption{Hybrid decoding algorithm}
	\label{alg:spcode}
	\begin{algorithmic}
		\STATE Given the number of $n$ results with coefficient matrix $\mathbf{M}^U$.
		\REPEAT
		\STATE Find a row $\mathbf{M}^U_{i'}$ in matrix $\mathbf{M}^U$ with $\|\mathbf{M}^U_{i'}\|_0=1$.
		\IF{such row does not exist }
		\STATE Randomly pick a $i\in \{1,\ldots,n\}$ and recover corresponding block $\mathbf{A}_i\mathbf{x}$ by rooting step (\ref{eq:borrow}).
		\ELSE
		\STATE Recover the block $\mathbf{A}_i\mathbf{x}$ from $\mathbf{\tilde{y}}_{i'}$.
		\ENDIF
		\FOR{each computation results $\mathbf{\tilde{y}}_{j}$}
		\IF{$m_{ji}$ is nonzero}
		\STATE $\mathbf{\tilde{y}}_{j}=\mathbf{\tilde{y}}_{j}-m_{ji}\mathbf{A}_i\mathbf{x}$ and set $m_{ji}=0$.
		\ENDIF
		\ENDFOR
		\UNTIL{every block of vector $\mathbf{y}$ is recovered.}
	\end{algorithmic}
\end{algorithm} 

Moreover, we can show that, for the $s$-diagonal code,  when the position of blocks recovered from rooting step (\ref{eq:borrow}) is specifically designed, the Algorithm~\ref{alg:spcode} requires at most $s$ rooting steps. This result provides an $O(rs)$ time decoding algorithm for $s$-diagonal code.
\begin{theorem}\label{thm:decode_diag}
	\emph{(Nearly linear time decoding of $s$-diagonal code)}  For $s$-diagonal code constructed in Definition~\ref{def:diagonal} and any $n$ received results indexed by  $U=\{i_1,\ldots,i_n\}\subseteq [n+s]$ and $1\leq i_1\leq \cdots \leq i_n \leq n+s$. Let the index $k\in[n]$  be $i_k\leq n< i_{k+1}$. Recover the blocks indexed by $[n]\backslash\{i_1,\ldots,i_k\}$ from rooting steps (\ref{eq:borrow}). Then the Algorithm~\ref{alg:spcode} uses the at most $s$ times rooting steps and the total decoding time is $O(rs)$.
\end{theorem}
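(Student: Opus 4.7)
The plan is to exploit the band structure of the $s$-diagonal coding matrix: once we commit to rooting out exactly the blocks in $B := [n] \setminus \{i_1, \ldots, i_k\}$, the peeling phase will proceed in the ripple order $i_1, i_2, \ldots, i_k$ without ever stalling. First I would bound the number of rooting steps. The indices $i_{k+1}, \ldots, i_n$ are distinct integers in $(n, n+s]$, so $|B| = n - k \leq s$, and since each rooting recovers exactly one block in $B$, there are at most $s$ rootings. Each rooting solves $(\mathbf{M}^U)^{\intercal} \mathbf{u} = \mathbf{e}_{k_0}$ (the factorization of $\mathbf{M}^U$ is precomputed offline) and then forms the linear combination (\ref{eq:borrow}), an $O(r)$ operation; hence the rooting phase costs $O(rs)$.

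Next I would argue that peeling completes on the remaining $k$ blocks. For $l \leq k$, row $i_l$ has original support $[\max\{1, i_l - s\}, i_l] \subseteq [n]$; after the rooting phase has zeroed out every column in $B$ from the received results, its residual support is $S_l = \{i_j : j \leq l,\ i_j \geq i_l - s\}$. Note that $i_l \in S_l$, and any other element $i_j \in S_l$ must satisfy $j < l$ because the sequence $(i_j)$ is strictly increasing. An induction on $l$ then delivers the ripple: for $l = 1$ one has $S_1 = \{i_1\}$, so row $i_1$ is peeled; for $l > 1$, assuming blocks $i_1, \ldots, i_{l-1}$ have been peeled, the residual support of row $i_l$ shrinks to $\{i_l\}$, yielding another ripple. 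The rows with $l > k$ are not needed to generate ripples---they were consumed by the rooting step---and their supports are simply cleared as the upper blocks are peeled, without affecting correctness. Each peeling touches only those rows containing the freshly recovered column, and the column density of $\mathbf{M}^U$ is at most $s+1$, so a single peeling costs $O((s+1) r/n)$; summing over $k \leq n$ peelings gives $O(sr)$.

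Combining both phases yields total decoding time $O(rs)$ with at most $s$ rootings, matching the theorem statement. The main obstacle is the induction in the peeling phase: the fact that the algorithm never stalls after the prescribed rootings is exactly where the band structure of the $s$-diagonal code matters, and the key subtlety is verifying that the residual support $S_l$ cannot contain any index exceeding $i_l$, which follows from the strict ordering of the $i_j$ together with the upper endpoint $i_l$ of the band $[\max\{1,i_l - s\}, i_l]$. Once this inductive unblocking is in hand, the cost accounting is routine and the $O(rs)$ bound drops out immediately.
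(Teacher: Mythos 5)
Your proposal is correct and follows essentially the same route as the paper's proof: bound the rooting count by $|[n]\setminus\{i_1,\ldots,i_k\}| = n-k \leq s$, then show by induction (tracking how the residual support of row $i_l$ shrinks to $\{i_l\}$ once $i_1,\ldots,i_{l-1}$ are peeled and $B$ is rooted out) that peeling never stalls, and finally account for cost using $O(r)$ per rooting and $O((s+1)r/n)$ per peeled column. Your explicit definition of $S_l$ and the observation that the strictly increasing order of the $i_j$ together with the band's upper endpoint $i_l$ precludes any larger index from appearing in $S_l$ is a slightly cleaner statement of the same inductive unblocking the paper uses.
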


\section{Experimental Results}

In this section, we present the experimental results at OSC~\cite{OhioSupercomputerCenter1987}. We compare our proposed coding schemes including $s-$diagonal code and $(d_1,d_2)$-cross code against the following existing schemes in both single matrix vector multiplication and gradient descent: (i) \textbf{uncoded scheme}: the input matrix is divided uniformly across all workers without replication and the master waits for all workers to send their results; (ii) \textbf{sparse MDS code}~\cite{lee2017coded}: the generator matrix is a sparse random Bernoulli matrix with average computation overhead $\Theta(\log(n))$. (iii) \textbf{polynomial code}~\cite{yu2017polynomial}: coded matrix multiplication scheme with optimum recovery threshold and nearly linear decoding time; (iv) \textbf{short dot code}~\cite{dutta2016short}: append the dummy vectors to data matrix $\mathbf{A}$ before applying the MDS code, which provides some sparsity of encoded data matrix with cost of increased recovery threshold. (v) \textbf{LT code}~\cite{luby2002lt}: rateless code widely used in broadcast communication. It achieves average computation overhead $\Theta(\log(n))$ and a nearly linear decoding time using peeling decoder. To simulate straggler effects in large-scale system, we randomly pick $s$ workers that are running a background thread.

\subsection{Coded Linear Transform}

We implement all methods in python using MPI4py. Each worker stores the coded submatrix $\tilde{\mathbf{A}}_i$ according to the coding matrix $\mathbf{M}$. In the computation stage, each worker computes the linear transform $\tilde{\mathbf{A}}_i\mathbf{x}$ and returns the results using \texttt{Isend()}. Then the master node actively listens to the responses from each worker via \texttt{Irecv()}, and uses \texttt{Waitany()} to keep polling for the earliest finished tasks. Upon receiving enough results, the master stops listening and starts decoding the results. 

%

\begin{figure}[t]
	\vskip -0.05in
	\begin{center}
		\centerline{\includegraphics[width=7.0in]{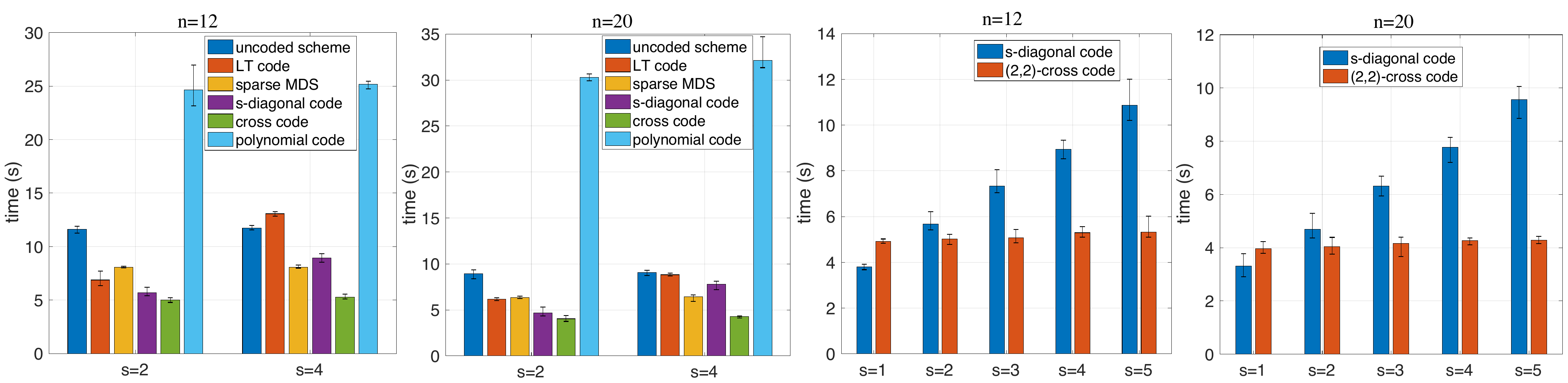}}
		\vskip 0in
		\caption{Comparison of job completion time including data transmission time, computation time and decoding time for $n=12,20$ and $s=2,4$.}
		\vskip -0.2in
		\label{fig:simres_Ax}
	\end{center}
\end{figure}

We first use a matrix with $r=t=1048576$ and nnz$(\mathbf{A})=89239674$ from data sets~\cite{davis2011university} , and evenly divide this matrix into $n=12$ and $20$ partitions. In Fig.~\ref{fig:simres_Ax}, we report the job completion time under $s=2$ and $s=4$, based on $20$ experimental runs. It can be observed that both $(2,2)$-cross code outperforms uncoded scheme (in 50\% the time), LT code (in 80\% the time), sparse MDS code (in 60\% the time), polynomial code (in 20\% the time) and our $s$-diagonal code. Moreover, we have the following observations: (i)LT code performs better than sparse MDS code when number of stragglers $s$ is small, and worse when $s$ increases. (ii) the uncoded scheme is faster than the polynomial code, because the density of encoded data matrix is greatly increased, which leads to increased computation time per worker and  additional I/O contention at the master node. 

We further compare our proposed $s$-diagonal code with $(2,2)$-cross code versus the number of stragglers $s$. As shown in Fig.~\ref{fig:simres_Ax}, when the number of stragglers $s$ increases, the job completion time of $s$-diagonal code increases while the $(2,2)$-cross code does not change. If the number of stragglers is smaller than $2$, the  $s$-diagonal code performs better than $(2,2)$-cross code. Another interesting observation is that the \emph{irregularity of work load} can decrease the  I/O contention. For example, when $s=2$, the computation load of $2$-diagonal code is similar as $(2,2)$-cross code, which is equal to $36$ in the case of $n=12$. However, the $(2,2)$-cross code costs less time due to the unbalanced worker load.

\subsection{Coded Gradient Descent}

\begin{figure}[t]
	\vskip -0.05in
	\begin{center}
		\centerline{\includegraphics[width=7.0in]{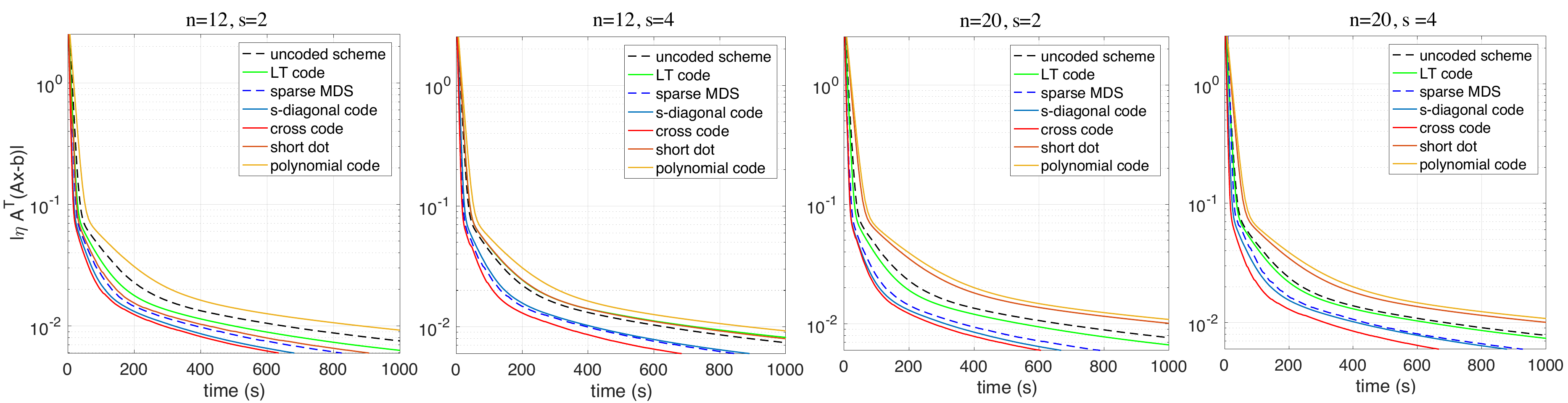}}
		\vskip 0in
		\caption{Magnitude of scaled gradient versus time for number of data partitions $n=12$, $20$ and number of stragglers $s=2$,$4$.}
		\vskip -0.4in
		\label{fig:simres_grad}
	\end{center}
\end{figure}

We first describe the gradient-based distributed algorithm to solve the following linear regression problem.
\begin{equation}
\min\limits_{\mathbf{x}}\frac{1}{2}\|\mathbf{Ax}-\mathbf{b}\|^2,
\end{equation}
where $\mathbf{A}\in\mathbb{R}^{r\times t}$ is the data matrix, $\mathbf{b}\in\mathbb{R}^{r}$ is the label vector and $\mathbf{x}\in\mathbb{R}^{t}$ is the unknown weight vector to be found. The standard gradient descent algorithm to solve the above problem is: in each iteration $t$,
\begin{equation}
\mathbf{x}_{t+1}=\mathbf{x}_t-\eta\mathbf{A}^{\intercal}(\mathbf{Ax}_t-\mathbf{b})=\mathbf{x}_t-\eta\sum\limits_{i=1}^{n}\mathbf{A}_i^{\intercal}\mathbf{A}_i\mathbf{x}_t+\eta\mathbf{A}^{\intercal}\mathbf{b}.
\end{equation}
In the uncoded gradient descent, each worker $i$ first stores a submatrix $\mathbf{A}_i$; then during iteration $t$, each worker $i$ first computes a vector $\mathbf{A}_i^{\intercal}\mathbf{A}_i\mathbf{x}_t$ and returns it to the master node; the master node updates the weight vector $\mathbf{x}_t$ according to the above gradient descent step and assigns the new weight vector $\mathbf{x}_{t+1}$ to each worker. The algorithm terminates until the gradient vanish, i.e., $\|\mathbf{A}^{\intercal}(\mathbf{Ax}_t-\mathbf{b})\|\leq \epsilon$. In the coded gradient descent, each worker $i$ first stores several submatrices according to the coding matrix $\mathbf{M}$; during iteration $t$, each worker $i$ computes a linear combinations,
\begin{equation}
\sum\limits_{j=1}^{n}m_{ij}\mathbf{A}_j^{\intercal}\mathbf{A}_j\mathbf{x}_t, \forall i\in[m].
\end{equation}
where $m_{ij}$ is the element of the coding matrix $\mathbf{M}$. The master node collects a subset of results, decodes the full gradient $\mathbf{A}^{\intercal}(\mathbf{Ax}_t-\mathbf{b})$ and updates the weight vector $\mathbf{x}_t$. Then continue to the next round.

We use a data from LIBSVM dataset repository with $r=19264097$ samples and $t=1163024$ features. We evenly divide the data matrix $\mathbf{A}$ into $n=12,20$ submatrices. In Fig.~\ref{fig:simres_grad} , we plot the magnitude of scaled gradient  $\|\eta\mathbf{A}^{\intercal}(\mathbf{Ax}-\mathbf{b})\|^2$ versus the running times of the above seven different schemes under $n=12,20$ and $s=2, 4$. Among all experiments, we can see that $(2,2)$-cross code converges at least $20\%$ faster than sparse MDS code, $2$ times faster than both uncoded scheme and LT code and at least $4$ times faster than short dot and  polynomial code. The $(2,2)$-cross code performs similar with $s$-diagonal code when $s=2$ and converges $30\%$ faster than $s$-diagonal code when $s=4$. 

\bibliographystyle{IEEEtran}
\bibliography{reff}

\begin{thebibliography}{10}
\providecommand{\url}[1]{#1}
\csname url@samestyle\endcsname
\providecommand{\newblock}{\relax}
\providecommand{\bibinfo}[2]{#2}
\providecommand{\BIBentrySTDinterwordspacing}{\spaceskip=0pt\relax}
\providecommand{\BIBentryALTinterwordstretchfactor}{4}
\providecommand{\BIBentryALTinterwordspacing}{\spaceskip=\fontdimen2\font plus
\BIBentryALTinterwordstretchfactor\fontdimen3\font minus
  \fontdimen4\font\relax}
\providecommand{\BIBforeignlanguage}[2]{{%
\expandafter\ifx\csname l@#1\endcsname\relax
\typeout{** WARNING: IEEEtran.bst: No hyphenation pattern has been}%
\typeout{** loaded for the language `#1'. Using the pattern for}%
\typeout{** the default language instead.}%
\else
\language=\csname l@#1\endcsname
\fi
#2}}
\providecommand{\BIBdecl}{\relax}
\BIBdecl

\bibitem{dean2008mapreduce}
J.~Dean and S.~Ghemawat, ``Mapreduce: simplified data processing on large
  clusters,'' \emph{Communications of the ACM}, vol.~51, no.~1, pp. 107--113,
  2008.

\bibitem{zaharia2010spark}
M.~Zaharia, M.~Chowdhury, M.~J. Franklin, S.~Shenker, and I.~Stoica, ``Spark:
  Cluster computing with working sets.'' \emph{HotCloud}, vol.~10, no. 10-10,
  p.~95, 2010.

\bibitem{dean2013tail}
J.~Dean and L.~A. Barroso, ``The tail at scale,'' \emph{Communications of the
  ACM}, vol.~56, no.~2, pp. 74--80, 2013.

\bibitem{dutta2016short}
S.~Dutta, V.~Cadambe, and P.~Grover, ``Short-dot: Computing large linear
  transforms distributedly using coded short dot products,'' in \emph{Advances
  In Neural Information Processing Systems}, 2016, pp. 2100--2108.

\bibitem{lee2017speeding}
K.~Lee, M.~Lam, R.~Pedarsani, D.~Papailiopoulos, and K.~Ramchandran, ``Speeding
  up distributed machine learning using codes,'' \emph{IEEE Transactions on
  Information Theory}, 2017.

\bibitem{lee2017coded}
K.~Lee, R.~Pedarsani, D.~Papailiopoulos, and K.~Ramchandran, ``Coded
  computation for multicore setups,'' in \emph{Information Theory (ISIT), 2017
  IEEE International Symposium on}.\hskip 1em plus 0.5em minus 0.4em\relax
  IEEE, 2017, pp. 2413--2417.

\bibitem{tandon2017gradient}
R.~Tandon, Q.~Lei, A.~G. Dimakis, and N.~Karampatziakis, ``Gradient coding:
  Avoiding stragglers in distributed learning,'' in \emph{International
  Conference on Machine Learning}, 2017, pp. 3368--3376.

\bibitem{yu2017polynomial}
Q.~Yu, M.~Maddah-Ali, and S.~Avestimehr, ``Polynomial codes: an optimal design
  for high-dimensional coded matrix multiplication,'' in \emph{Advances in
  Neural Information Processing Systems}, 2017, pp. 4406--4416.

\bibitem{li2018fundamental}
S.~Li, M.~A. Maddah-Ali, Q.~Yu, and A.~S. Avestimehr, ``A fundamental tradeoff
  between computation and communication in distributed computing,'' \emph{IEEE
  Transactions on Information Theory}, vol.~64, no.~1, pp. 109--128, 2018.

\bibitem{wang2018coded}
S.~Wang, J.~Liu, and N.~Shroff, ``Coded sparse matrix multiplication,''
  \emph{arXiv preprint arXiv:1802.03430}, 2018.

\bibitem{OhioSupercomputerCenter1987}
O.~S. Center, ``Ohio supercomputer center,''
  \url{http://osc.edu/ark:/19495/f5s1ph73}, 1987.

\bibitem{schwartz1980fast}
J.~T. Schwartz, ``Fast probabilistic algorithms for verification of polynomial
  identities,'' \emph{Journal of the ACM (JACM)}, vol.~27, no.~4, pp. 701--717,
  1980.

\bibitem{erdos1964random}
P.~Erdos and A.~Renyi, ``On random matrices,'' \emph{Magyar Tud. Akad. Mat.
  Kutat{\'o} Int. K{\"o}zl}, vol.~8, no. 455-461, p. 1964, 1964.

\bibitem{walkup1980matchings}
D.~W. Walkup, ``Matchings in random regular bipartite digraphs,''
  \emph{Discrete Mathematics}, vol.~31, no.~1, pp. 59--64, 1980.

\bibitem{luby2001efficient}
M.~G. Luby, M.~Mitzenmacher, M.~A. Shokrollahi, and D.~A. Spielman, ``Efficient
  erasure correcting codes,'' \emph{IEEE Transactions on Information Theory},
  vol.~47, no.~2, pp. 569--584, 2001.

\bibitem{luby2002lt}
M.~Luby, ``Lt codes,'' in \emph{Foundations of Computer Science, 2002.
  Proceedings. The 43rd Annual IEEE Symposium on}.\hskip 1em plus 0.5em minus
  0.4em\relax IEEE, 2002, pp. 271--280.

\bibitem{davis2011university}
T.~A. Davis and Y.~Hu, ``The university of florida sparse matrix collection,''
  \emph{ACM Transactions on Mathematical Software (TOMS)}, vol.~38, no.~1,
  p.~1, 2011.

\end{thebibliography}

\appendix

\subsection{Proof of Theorem~\ref{thm:lowerbound_load}}

Given the parameter $m$ and $n$, considering any coded computation scheme that can resist $s$ stragglers. We first define the following bipartite graph model between the $m$ workers $[m]$ and $n$ data partitions $[n]$, where we connect node $i\in [m]$ and node $j\in[n]$ if $m_{ij}\neq 0$ (worker $i$ has access to data block $\mathbf{A}_j$). The degree of worker node $i\in [m]$ is $\|\mathbf{M}_i\|_0$. We show that the degree of  node $j\in[n]$ must be at least $s+1$. Suppose that it is less than $s+1$ and all its neighbors are stragglers. In this case, there exists no worker that is non-straggler and also access to $\mathbf{A}_j$(or the corresponding submatrix of coding matrix is rank deficient). It is contradictory to the assumption that it can resist $s$ stragglers.

Based on the above argument and the fact that, the sum of degrees of one partition is equal to the sum of degrees in another partition in the bipartite graph, we have
\begin{equation}
l(\mathbf{M})=\sum\limits_{i=1}^{n}\|\mathbf{M}_i\|_0\geq n(s+1).
\end{equation}
Therefore, the theorem follows.

\subsection{Proof of Lemma~\ref{lm:dc_matching}}

A direct application of Hall's theorem is: given a bipartite graph $G^D(V_1,V_2)$, for each $U\subseteq[m]$ with $|U|=n$, each subgraph $G^D(U,V_2)$ contains a prefect matching if and only if every subset set $S\subseteq V_1$ such that $|N(S)|<|S|$, where the neighboring set $N(S)$ is defined as $N(S)=\{y|x, y \text{ are connected for some } x\in S\}$. This result is equivalent to the following condition: for each subset $I\subseteq [m]$,
\begin{equation}
\left|\bigcup\limits_{i\in I}\text{supp}(\mathbf{M}_i)\right|\geq |I|,
\end{equation}
where supp$(\mathbf{M_i})$ is defined as the support set: supp$(\mathbf{M_i})=\{j|m_{ij}\neq 0,j\in[n]\}$, $\mathbf{M_i}$ is $i$th row of the coding matrix $\mathbf{M}$. Suppose that the set $I=\{i_1,i_2,\ldots,i_k\}$ with $i_1 < i_2<\ldots < i_k$ and $\text{supp}(\mathbf{M}_{i_{k_1}})\cap \text{supp}(\mathbf{M}_{i_{k_2}})\neq \emptyset$. Otherwise, we can divide the set into two parts $I_l=\{i_1,i_2,\ldots,i_{k_1}\}$ and $I_r=\{i_{k_2},i_2,\ldots,i_k\}$ and prove the similar result in these two sets. Based on our construction of diagonal code, we have 
\begin{equation}
\left|\bigcup\limits_{i\in I}\text{supp}(\mathbf{M}_i)\right|=\min\{i_k,n\}-\max\{1,i_1-s\}\overset{(a)}{\geq} k,
\end{equation}
The above, step (a) is based on the fact that $i_k-i_1\geq k$. Therefore, the lemma follows.

\subsection{Proof of Corollary~\ref{coro:dc_code}}
To prove $1$-diagonal code achieves the recovery threshold of $n$, we need to show that, for each subset $U\subseteq [n+1]$ with $|U|=n$, submatrix $\mathbf{M}^U$ is full rank. Let $U=[n+1]\backslash\{k\}$, the submatrix $\mathbf{M}^U$ satisfies
\begin{equation}
\mathbf{M}^U=\begin{bmatrix}
\mathbf{E}& \mathbf{0}\\
\mathbf{0} & \mathbf{F},
\end{bmatrix}
\end{equation}
where $\mathbf{E}$ is a $(k-1)$ dimensional square submatrix consists of first $(k-1)$ rows and columns, and $\mathbf{F}$ is a $(n-k+1)$ dimensional square submatrix consists of the last $(n-k+1)$ rows and columns. The matrix $\mathbf{E}$ is a lower diagonal matrix due to the fact that, for $i<j$,
\begin{equation}
E_{ij}=m_{ij}=0.
\end{equation}
The matrix $\mathbf{E}$ is a upper diagonal matrix due to the fact that, for $i>j$,
\begin{equation}
F_{ij}=m_{i+k,j+k-1}\overset{(a)}{=}0.
\end{equation}
The above, (a) utilizes the fact that $(i+k)-(j+k-1)\geq 2$ when $i>j$. Based on the above analysis, we have
\begin{equation}
\det(\mathbf{M}^U)=\det(\mathbf{E})\det(\mathbf{F})=1,
\end{equation}
which implies that matrix $\mathbf{M}^U$ is full rank. Therefore, the corollary follows.

\subsection{Proof of Theorem~\ref{thm:ber_code}}

Based on our analysis in Section~\ref{sec:dc}, the full rank probability of an $n\times n$ submatrix $\mathbf{M}^U$ can be lower bounded by a constant multiplying the probability of the existence of perfect matching in a bipartite graph. 
\begin{equation}\label{pf:eq:rank_match}
\mathbb{P}(|\mathbf{M}^U|\neq 0)=\underbrace{\mathbb{P}(|\mathbf{M}^U|\neq 0\big||\mathbf{M}^U(x)| \not\equiv​0)}_{\text{ S-Z Lemma: }\geq 1-1/4^m}\cdot\underbrace{\mathbb{P}(|\mathbf{M}^U(x)|\not\equiv​0)}_{\text{bipartite graph contains perfect matching}}
\end{equation}
Therefore, to prove that the $p$-Bernoulli code achieves the probabilistic recovery threshold of $n$, we need to show that each subgraph contains a perfect matching with high probability. Without loss of generality, we can define the following random bipartite graph model.
\begin{definition}\label{def:ber_graph}
	\emph{($p$-Bernoulli random graph)} Graph $G^b(U,V_2,p)$ initially contains the isolated nodes with $|U|=|V_2|=n$. Each node $v_1\in U$ and node $v_2\in V_2$ is connected with probability $p$ independently.
\end{definition}
Clearly, the above model describes the support structure of each submatrix $\mathbf{M}^U$ of $p$-Bernoulli code. The rest is to show that, with specific choice of $p$, the subgraph $G^b(U,V_2,p)$ contains a perfect matching with high probability.

The technical idea is to use the Hall's theorem. Assume that the bipartite graph $G^b(U,V_2, p)$ does not have a perfect matching. Then by Hall's condition, there exists a violating set $S\subseteq U$ or $S\subseteq V_2$ such that $|N(S)|<|S|$. Formally, by choosing such $S$ of smallest cardinality, one immediate consequence is the following technical statement.
\begin{lemma}\label{lm:halltheorem}
	If the bipartite graph $G^b(U,V_2, p)$ does not contain a perfect matching, then there exists a set $S\subseteq U$ or $S\subseteq V_2$ with the following properties.
	\begin{enumerate}
		\item $|S|=|N(S)|+1$.
		\item For each node $t\in N(S)$, there exists at least two adjacent nodes in $S$.
		\item $|S|\leq n/2$.
	\end{enumerate}
\end{lemma}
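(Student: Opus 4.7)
The plan is to start from Hall's theorem, pick a violating set of minimum cardinality across both sides of the bipartition, and then verify the three properties one by one, using an additional ``dual'' argument to obtain the size bound in property 3.

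First, since $|U|=|V_2|=n$, a perfect matching saturating one side automatically saturates the other; hence the absence of a perfect matching together with Hall's theorem guarantees the existence of at least one \emph{violating set}, either some $A\subseteq U$ with $|N(A)|<|A|$ or, by the symmetric argument, some $B\subseteq V_2$ with $|N(B)|<|B|$. Let $\mathcal{F}$ collect all such violating sets on either side, and choose $S\in\mathcal{F}$ with minimum cardinality.

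Next I would verify properties 1 and 2 directly from the minimality of $|S|$. For property 1, if $|N(S)|\le |S|-2$, then removing any single element from $S$ would still leave a set whose neighborhood is strictly smaller than its size, contradicting the minimality of $|S|$; hence $|N(S)|=|S|-1$. For property 2, suppose some $t\in N(S)$ has a unique neighbor $s\in S$. Then $N(S\setminus\{s\})\subseteq N(S)\setminus\{t\}$, so
\[
|N(S\setminus\{s\})|\le |N(S)|-1 = |S|-2 < |S\setminus\{s\}|,
\]
producing a smaller violating set in $\mathcal{F}$, a contradiction. Therefore every $t\in N(S)$ must have at least two neighbors in $S$.

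Property 3 is the step I expect to be the main obstacle, since it cannot be obtained by a purely local trimming of $S$; it requires jumping to the opposite side of the bipartition. Without loss of generality assume $S\subseteq U$, the case $S\subseteq V_2$ being symmetric. Define the \emph{dual} set $T:=V_2\setminus N(S)$. Since no node in $T$ has a neighbor in $S$, we have $N(T)\subseteq U\setminus S$, and combined with property 1 this gives
\[
|N(T)|\le n-|S| = n-|N(S)|-1 = |T|-1,
\]
so $T\in\mathcal{F}$ as well. The minimality of $|S|$, which was taken over both sides, then forces $|S|\le |T|=n-|S|+1$, i.e.\ $|S|\le (n+1)/2$, which matches the stated bound $|S|\le n/2$ up to the usual ceiling. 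The delicate point is precisely that the minimization defining $S$ must range over violating sets on \emph{both} sides of the bipartition, so that this complementation argument can invoke minimality without sacrificing the local properties already established.
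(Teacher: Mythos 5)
Your proof is correct and is exactly the argument the paper gestures at with ``by choosing such $S$ of smallest cardinality,'' with the details carefully filled in. The one point worth flagging is that your complementation step actually yields $|S|\le(n+1)/2$, and this is the tight bound: for odd $n$ the stated $|S|\le n/2$ is not achievable. For instance, with $n=3$, let $u_1,u_2$ be adjacent only to $v_1$ and $u_3$ adjacent to $v_2,v_3$; every minimum Hall-violating set on either side has size $2>3/2$, yet satisfies properties 1 and 2 and your bound $2\le(n+1)/2=2$. This off-by-one in the lemma's statement is harmless for the downstream union bound in Theorem~\ref{thm:ber_code} (the sum's upper limit just shifts by one term), and your identification of the crux --- that the minimization must be taken over violating sets on \emph{both} sides so that $T=V_2\setminus N(S)$ can be invoked via minimality --- is exactly the right observation.
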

\textbf{Case 1:} We consider $S\subseteq U$ and $|S|=1$. In this case, we have $|N(S)|=0$ and need to estimate the probability that there exists one isolated node in partition $U$. Let random variable $X_i$ be the indicator function of the event that node $v_i$ is isolated. Then we have the probability that
\begin{equation*}
\mathbb{P}(X_i=1)=\left(1-p\right)^n,
\end{equation*}
Let $X$ be the total number of isolated nodes in partition $U$. Then we have
\begin{align}
\mathbb{E}[X] = {E}\left[\sum\limits_{i=1}^{n}X_i\right]=n\left(1-p\right)^n\overset{(a)}{\leq} \frac{1}{n}.
\end{align}
The above, step (a) utilizes the assumption that $p=2\log(n)/n$ and the inequality that $(1+x/n)^n\leq e^x$.

\textbf{Case 2:} We consider $S\subseteq U$ and $2\leq |S|\leq n/2$. Let $E$ be the event that such an $S$ exists, we have
\begin{align}
\mathbb{P(E)}\leq& \sum\limits_{k=2}^{n/2}\binom{n}{k}\binom{n}{k-1}\binom{k}{2}^{k-1}(1-p)^{k(n-k+1)}p^{2(k-1)}\notag\\
\overset{(a)}{<}&\sum\limits_{k=2}^{n/2}\frac{1}{6}\cdot\frac{n}{(n-k)(n-k+1)}\cdot\frac{n^{2n}}{k^{2k}(n-k)^{2(n-k)}}\cdot\left[\frac{k(k-1)}{2}\right]^{k-1}\cdot(1-p)^{k(n-k+1)}p^{2(k-1)}\notag\\
\overset{(b)}{<}&\sum\limits_{k=2}^{n/2}\frac{e^2n}{6k^2(n-k)(n-k+1)}\cdot\left(\frac{2\log^2(n)}{n}\right)^{k-1}
\notag\\
\overset{(c)}{<}&\sum\limits_{k=2}^{n/2}\frac{2}{3(n-1)}\cdot\left(\frac{2\log^2(n)}{n}\right)^{k-1}\notag\\
<&\frac{\log^2(n)}{3n}.
\end{align}
The above, step (a) is based on the inequality
\begin{equation}\label{eq:sterlin}
\sqrt{2\pi n}\left(\frac{n}{e}\right)^n \leq n!\leq \frac{60}{59}\sqrt{2\pi n}\left(\frac{n}{e}\right)^n, \forall n\geq 5.
\end{equation}
The step (b) utilizes the fact that $p=2\log(n)/n$, $k\leq n/2$ and the inequality $(1+x/n)^n\leq e^x$; step (c) is based on the fact that $k(n-k+1)\geq 2(n-1)$ and $k(n-k)\geq 2(n-2)$, $n/(n-2)<5/3$ for $k\geq 2$ and $n\geq 5$. Utilizing the union bound to sum the results in case 1 and case 2, we can obtain that the probability that graph $G(U,V_2,p)$ contains a perfect matching is at least
\begin{equation}
1-\frac{\log^2(n)}{3n}.
\end{equation}
Therefore, incorporating this result into estimating (\ref{pf:eq:rank_matching}), the theorem follows.

\subsection{Proof of Theorem~\ref{thm:cross_code}}

To prove the $(d_1,d_2)$-cross code achieves the probabilistic recovery threshold of $n$, we need to show that each subgraph of the following random bipartite graph contains a perfect matching with high probability.
\begin{definition}
	\emph{($(d_1,d_2)$-regular random graph)} Graph $G^c(V_1,V_2,d_1,d_2)$ initially contains the isolated nodes with $|V_1|=m$ and $|V_2|=n$. Each node $v_1\in V_1$ ($v_2\in V_2$) randomly and uniformly connects to $d_1$ ($d_2$) nodes in $V_1$ ($V_2$). 
\end{definition}
The corresponding subgraph is defined as follows.
\begin{definition}
	For each $U\subseteq V_1$ with $|U|=n$, the subgraph $G^c(U,V_2,d_1,\bar{d})$ is obtained by deleting the nodes in $V_1\backslash U$ and corresponding arcs.
\end{definition}
Clearly, the above definitions of $(d_1,d_2)$-regular graph and corresponding subgraph describe the support structure of the coding matrix $\mathbf{M}$ and submatrix $\mathbf{M}^U$ of the $(d_1,d_2)$-cross code. Moreover, we have the following result regarding the structure of each subgraph $G^c(U,V_2,d_1,\bar{d})$

\textbf{Claim.} For each $U\subseteq V_1$ with $|U|=n$, the subgraph $G^c(U,V_2,d_1,\bar{d})$ can be constructed from the following procedure" (i) Initially, graph $G^c(U,V_2,d_1,\bar{d})$  contain the isolated nodes with $|U|=|V_2|=n$; (ii) Each node $v_1\in U$ randomly and uniformly connects to $d_1$ nodes in $V_2$; (iii) Each node $v_2\in V_2$ randomly and uniformly connects to $l$ nodes in $V_1$, where $l$ is chosen according to the distribution:
\begin{equation}\label{eq:cross_code_deg}
\mathbb{P}(l)=\binom{n}{l}\binom{m-n}{d_2-l}\bigg/\binom{m}{d_2}, 0\leq l\leq d_2.
\end{equation}

Then, the rest is to show that the subgraph $G^c(U,V_2,d_1,\bar{d})$ contains a perfect matching with high probability.
\begin{definition}\label{def:blocking}
	\emph{(Forbidden $k$-pair)} For a bipartite graph $G(U,V_2,d_1,\bar{d})$, a pair $(A,B)$ is called a $k$-blocking pair if $A\subseteq U$ with $|A|=k$, $B\subseteq V_2$ with $|B|=n-k+1$, and there exists no arc between the nodes of sets $A$ and $B$. A blocking $k$-pair $(A,B)$ is called a forbidden pair if at least one of the following holds:
	\begin{enumerate}
		\item $2\leq k< (n+1)/2$, and for any $v_1\in A$ and $v_2\in V_2\backslash B$, $(A\backslash\{v_1\}, B\cup\{v_2\})$ is not a $(k-1)$-blocking pair.
		\item $(n+1)/2\leq k\leq n-1$, and for any $v_1\in U\backslash A$ and $v_2\in B$, $(A\cup\{v_1\}, B\backslash\{v_2\})$ is not a $(k+1)$-blocking pair.
	\end{enumerate}
\end{definition}

The following technical lemma modified from~\cite{walkup1980matchings} is useful in our proof.
\begin{lemma}\label{lm:fb_match}
	If the graph $G^c(U,V_2,d_1,\bar{d})$ does not contain a perfect matching, then there exists a forbidden $k$-pair for some $k$.
\end{lemma}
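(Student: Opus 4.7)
The plan is to adapt Walkup's argument for $2$-out bipartite matchings by combining Hall's marriage theorem with a minimality principle applied to violating sets. Since $G^c(U,V_2,d_1,\bar{d})$ has no perfect matching, Hall's theorem guarantees a subset $A\subseteq U$ with $|N(A)|<|A|$, and by the symmetric form (both partitions have $n$ vertices) also some $T\subseteq V_2$ with $|N(T)|<|T|$. Let $A^*$ and $T^*$ denote minimum-cardinality Hall violators on the two sides. A simple counting observation is that $U\setminus N(T^*)$ has no neighbors in $T^*$ and therefore is itself a left-side violator of size $n-|T^*|+1$; this forces $|A^*|+|T^*|\leq n+1$, so at least one of the two quantities is at most $(n+1)/2$.

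By symmetry I may assume $|A^*|\leq (n+1)/2$. Set $k=|A^*|$ and pick any $B^*\subseteq V_2\setminus N(A^*)$ with $|B^*|=n-k+1$; the pair $(A^*,B^*)$ is a $k$-blocking pair by construction. To verify condition~1 of Definition~\ref{def:blocking}, I would argue by contradiction: if there existed $v_1\in A^*$ and $v_2\in V_2\setminus B^*$ such that $(A^*\setminus\{v_1\},B^*\cup\{v_2\})$ were a $(k-1)$-blocking pair, then the neighborhood of $A^*\setminus\{v_1\}$ would lie entirely in $V_2\setminus(B^*\cup\{v_2\})$, a set of cardinality $k-2$, so $A^*\setminus\{v_1\}$ would be a strictly smaller Hall violator, contradicting the minimality of $A^*$. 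The dual case $|T^*|\leq (n+1)/2$ is handled by the analogous ``grow'' argument on the right side, starting from $(U\setminus N(T^*), T^*)$ and producing a forbidden pair satisfying condition~2 with $(n+1)/2\leq k\leq n-1$.

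The main obstacle I anticipate is ruling out the two boundary values $k=1$ and $k=n$, which would take the pair outside the admissible range $[2,n-1]$ in Definition~\ref{def:blocking}. The reduction to $k=1$ would require an isolated vertex in $U$, which is impossible under the cross-code construction since every left-node has degree exactly $d_1\geq 1$. The symmetric reduction to $k=n$ would require an isolated vertex in $V_2$; this is not structurally excluded by the construction, but the probability that some right-node has degree zero can be estimated directly from the distribution~(\ref{eq:cross_code_deg}) and absorbed into the outer probability bound in Theorem~\ref{thm:cross_code}. Thus, under the implicit assumption that no right-vertex is isolated, the minimality argument above produces a forbidden $k$-pair with $2\leq k\leq n-1$, completing the proof.
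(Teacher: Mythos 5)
Your proof takes a genuinely different route from the paper's. The paper iterates: starting from an arbitrary blocking $k$-pair, it shrinks the pair when $k<(n+1)/2$ (or grows it when $k\geq(n+1)/2$), and argues the process must terminate at a forbidden pair because a $1$-blocking pair would force an isolated node in $U$. You instead pass directly to a \emph{minimum}-cardinality Hall violator on each side, prove $|A^*|+|T^*|\leq n+1$ by a subset-taking argument, and show the associated blocking pair already satisfies the relevant forbidden-pair condition by a one-step contradiction against minimality. This is cleaner and avoids the iteration. Two small points are worth tightening: (i) the set $U\setminus N(T^*)$ has size $n-|N(T^*)|$, which may exceed $n-|T^*|+1$; one must pass to a subset of that exact size (as you implicitly do) before invoking minimality; and (ii) when $n$ is odd and $|A^*|=(n+1)/2$, Condition~1 of Definition~\ref{def:blocking} does not apply (it requires $k<(n+1)/2$); in that case one must switch to $T^*$, which then satisfies $2\leq|T^*|\leq(n+1)/2$ under the non-isolation hypothesis, placing $k'=n-|T^*|+1$ in Condition~2's range.

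More importantly, you have put your finger on a genuine gap that the paper's own proof glosses over. The paper's shrinking argument terminates at $k=1$ by citing $d_1\geq 1$ for left nodes, and then asserts ``the proof for $k\geq(n+1)/2$ is same.'' It is not: the growing argument terminates at $k=n$, i.e., at an isolated right node, and the degree $l$ of a right node in the \emph{subgraph} $G^c(U,V_2,d_1,\bar d)$ is hypergeometric per~(\ref{eq:cross_code_deg}) and can equal zero. A concrete counterexample to the lemma as stated: take $n=3$, $v_1\in V_2$ isolated, $v_2,v_3$ adjacent to all of $U$, and each $u_i$ adjacent to $\{v_2,v_3\}$ (consistent with $d_1=2$). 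The only blocking pair is $(U,\{v_1\})$ with $k=n=3$, which lies outside the admissible range $[2,n-1]$, so no forbidden pair exists, yet there is no perfect matching. Thus the lemma needs the extra hypothesis that $G^c(U,V_2,d_1,\bar d)$ has no isolated vertex in $V_2$, and the probability of an isolated right node must be added to the union bound in the proof of Theorem~\ref{thm:cross_code}. Your proposal to estimate that probability from~(\ref{eq:cross_code_deg}) and absorb it into the final bound is exactly the right repair (and indeed it contributes only $O(1/n)$ under either parameter regime of the theorem), but you should make it an explicit hypothesis of the lemma rather than an ``implicit assumption.''
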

\begin{proof}
	One direct application of the Konig's theorem to bipartite graph shows that $G^c(U,V_2,d_1,\bar{d})$ contains a perfect matching if and only if it does not contain any blocking $k$-pair. It is rest to show that the existence of a $k$-blocking pair implies that there exists a forbidden $l-$pair for some $l$. Suppose that there exists a  $k$-blocking pair $(A,B)$ with $k< (n+1)/2$, and it is not a forbidden $k$-pair. Otherwise, we already find a forbidden pair. Then, it implies that there exists $v_1\in A$ and $v_2\in V_2\backslash B$ such that $(A\backslash\{v_1\}, B\cup\{v_2\})$ is a $(k-1)$-blocking pair. Similarly, we can continue above argument on blocking pair $(A\backslash\{v_1\}, B\cup\{v_2\})$ until we find a forbidden pair. Otherwise, we will find a $1-$blocking pair $(A',B')$, which is a contradiction to our assumption that each node $v_1\in U$ connects $d_1$ nodes in $V_2$. The proof for $k\geq(n+1)/2$ is same.
\end{proof}
Let $E$ be the event that graph $G(U,V_2,d_1,\bar{d})\text{ contains perfect matching}$. Based on the the results of Lemma~\ref{lm:fb_match}, we have
\begin{align}
&1-\mathbb{P}(E)=\mathbb{P}\left(\bigcup\limits_{k=2}^{n-1}k\text{-forbidden pair exists}\right)\notag\\
&\leq \sum\limits_{k=2}^{n-1}\mathbb{P}\left(k\text{-forbidden pair exists}\right)\notag\\
&\leq \sum\limits_{k=2}^{n-1}\binom{n}{k}\binom{n}{n-k+1}\cdot\mathbb{P}\left((A,B)\text{ is }k\text{-forbidden pair}\right)\notag\\
&=\sum\limits_{k=2}^{n-1}\binom{n}{k}\binom{n}{n-k+1}\alpha(k)\beta(k).\notag
\end{align}
The above, $A$ and $B$ are defined as node sets such that $A\subseteq U$ with $|A|=k$ and  $B\subseteq V_2$ with $|B|=n-k+1$. The $\alpha(k)$ and $\beta(k)$ are defined as follows.
\begin{align}
&\alpha(k)=\mathbb{P}\left((A,B)\text{ is }k\text{-forbidden pair}\big|(A,B)\text{ is }k\text{-blocking pair}\right),\\
&\beta(k)=\mathbb{P}\left((A,B)\text{ is }k\text{-blocking pair}\right).
\end{align}
From the Definition~\ref{def:blocking}, it can be obtained the following estimation of probability $\beta(k)$.
\begin{equation}\label{eq:org_beta}
\beta(k)=\left[\binom{k-1}{d_1}\bigg/\binom{n}{d_1}\right]^k\cdot\left[\sum\limits_{l=0}^{d_2}\binom{n-k}{l}\binom{m-n}{d_2-l}\bigg/\binom{m}{d_2}\right]^{n-k+1}.
\end{equation}
The first factor gives the probability that there exists no arc from nodes of $A$ to nodes of $B$. The second factor gives the probability that there exists no arc from nodes of $B$ to nodes of $A$. The summation operation in the second factor comes from conditioning such probability on the distribution (\ref{eq:cross_code_deg}). Based on the Chu-Vandermonde identity, one can simplify $\beta(k)$ as
\begin{equation}\label{eq:beta}
\beta(k)=\left[\binom{k-1}{d_1}\bigg/\binom{n}{d_1}\right]^k\cdot\left[\binom{m-k}{d_2}\bigg/\binom{m}{d_2}\right]^{n-k+1}.
\end{equation}
Utilizing the inequality
\begin{equation}\label{eq:sterlin_2}
\sqrt{2\pi n}\left(\frac{n}{e}\right)^n \leq n!\leq e^{\frac{1}{12n}}\sqrt{2\pi n}\left(\frac{n}{e}\right)^n, 
\end{equation}
we have
\begin{align}
&\binom{n}{k}\binom{n}{n-k+1}\leq \frac{n^{2n}}{k^{2k}(n-k)^{2(n-k)}}\cdot\frac{ne^{1/6n}}{2\pi(n-k)(n-k+1)},\label{pf:eq:comp1}
\end{align}
\begin{align}
\binom{k-1}{d_1}\bigg/\binom{n}{d_1} &\leq c_1 \sqrt{\frac{(k-1)(n-d_1)}{(k-d_1-1)n}}\left(\frac{k-1}{k-d_1-1}\right)^{k-d_1-1}\left(\frac{n-d_1}{n}\right)^{n-d_1}\left(\frac{k-1}{n}\right)^{d_1}\notag\\
&\overset{(a)}{\leq} c_1 \sqrt{\frac{k-1}{k-d_1-1}}\left(\frac{n-d_1}{n}\right)^{\frac{1}{2}-d_1}\left(\frac{k-1}{n}\right)^{d_1}.\label{pf:eq:comp2}
\end{align}
\begin{align}
\binom{m-k}{d_2}\bigg/\binom{m}{d_2}
&\overset{(a)}{\leq} c_2 \sqrt{\frac{m-k}{m-d_2-k}}\left(\frac{m-d_2}{m}\right)^{\frac{1}{2}-d_2}\left(\frac{m-k}{m}\right)^{d_2}.\label{pf:eq:comp3}
\end{align}
In the above, step (a) is based on fact that $(1+x/n)^n\leq e^x$ and parameters $c_1$ and $c_2$ are defined as
\begin{align}\label{pf:eq:comp4}
c_1 = e^{1/12(k-1)}e^{1/12(n-d_1)}, \text{ } c_2 = e^{1/12(m-k)}e^{1/12(m-d_2)}.
\end{align}
Combining the equations (\ref{pf:eq:comp1})-(\ref{pf:eq:comp4}), we can obtain that
\begin{align}
\gamma(k)&=\binom{n}{k}\binom{n}{n-k+1}\beta(k)\notag\\
&< c_3\left(1-\frac{1}{k}\right)^{2k}\cdot\frac{n(m-k)^{d_2}}{m^{d_2}(n-k)(n-k+1)}\cdot\left[\frac{n^2(m-k)^{d_2}}{(n-k)^2m^{d_2}}\right]^{n-k}\cdot\left(\frac{k-1}{n}\right)^{(d_1-2)k}.
\end{align}
The constant $c_3$ is given by $c_3=e^{d_1^2+13d_1/12+d_2+1/3}/2\pi$. The third term satisfies
\begin{equation}\label{pf:eq:keyest}
\frac{n^2(m-k)^{d_2}}{(n-k)^2m^{d_2}}\leq \max\left\{1,\frac{n^2(m-n+1)^{d_2}}{m^{d_2}}\right\}, \forall 2\leq k\leq n-1,
\end{equation}
which is based on the fact that, if $d_2>2$, the function
\begin{equation*}
f(k)=\frac{n^2(m-k)^{d_2}}{(n-k)^2m^{d_2}} 
\end{equation*}
is monotonically decreasing when $k\leq (d_2n-2m)/(d_2-2)$ and  increasing when $k\geq (d_2n-2m)/(d_2-2)$. If $d_2=2$, it is monotonically increasing for $k\geq 0$.

We then estimate the conditional probability $\alpha(k)$. Given a blocking pair $A\subseteq U$ with $|A|=k$ and $B\subseteq V_2$ with $|B|=n-k+1$, and a node $v_i\in A$, let $E_i$ be the set of nodes in $V_2\backslash B$ on which $d_1$ arcs from node $v_i$ terminate. Let $E'$ be the set of nodes $v$ in $V_2\backslash B$ such that at least $2$ arcs leaving from $v$ to nodes in $A$. Then we have the following technical lemma.
\begin{lemma}\label{lm:estforbidden}
	Given a blocking pair $A\subseteq U$ with $|A|=k$ and $B\subseteq V_2$ with $|B|=n-k+1$, if $(A,B)$ is $k$-forbidden pair, then
	\begin{equation*}
	E^*=\left(\bigcup\limits_{i=1}^kE_i\right)\cup E' = V_2\backslash B.
	\end{equation*}
\end{lemma}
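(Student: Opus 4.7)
The plan is to argue by contradiction: assume some $v^*\in V_2\setminus B$ lies outside $E^*$ and derive a violation of the forbidden property of $(A,B)$. If $v^*\notin E^*$, then (i) no $v_i\in A$ has a forward arc landing at $v^*$ (else $v^*\in E_i$), and (ii) $v^*$ has at most one backward arc terminating inside $A$ (else $v^*\in E'$). The crux is that the forbidden condition produces enough backward arcs at $v^*$ to directly contradict (ii) once (i) is assumed.

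Focus first on the range $2\leq k<(n+1)/2$. For any fixed $v_1\in A$, condition~1 of Definition~\ref{def:blocking} applied with this $v_1$ and with $v_2=v^*$ asserts that $(A\setminus\{v_1\},B\cup\{v^*\})$ is not a $(k-1)$-blocking pair; since the cardinalities match, an arc must exist between the two sides. Because $(A,B)$ is $k$-blocking, no arc connects any subset of $A$ with any node of $B$, so the witnessing arc is necessarily incident to $v^*$. By (i), it cannot be a forward arc from $A\setminus\{v_1\}$ to $v^*$, so it must be a backward arc from $v^*$ to some node of $A\setminus\{v_1\}$; and this holds for \emph{every} $v_1\in A$.

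Now squeeze against (ii). If $v^*$ has zero backward arcs into $A$, the statement above is already vacuous for any $v_1$, a contradiction. If $v^*$ has exactly one such arc, landing at some $v^\star\in A$, then choosing $v_1=v^\star$ removes the unique backward target, again a contradiction. Hence $v^*$ must have at least two backward arcs into $A$, putting $v^*\in E'\subseteq E^*$ and contradicting the assumption $v^*\notin E^*$. The range $(n+1)/2\leq k\leq n-1$ is handled by the parallel argument using condition~2 (expanding via $v_1'\in U\setminus A$ and $v_2'\in B$ instead of shrinking via $v_1\in A$); the combinatorial mechanism---blocking forces the guaranteed arc onto the newly modified node, and forbiddenness supplies enough such witnesses to violate the appropriate analogue of (ii)---is the same.

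The main obstacle is bookkeeping the two arc-generation mechanisms of the $(d_1,d_2)$-cross code: arcs produced in the $d_1$-step live on $U\to V_2$ and populate the sets $E_i$, whereas arcs produced in the variable-degree-$\ell$ step live on $V_2\to U$ and populate $E'$. Blocking and forbiddenness, however, are insensitive to direction and depend only on edge existence, so once one cleanly separates the two flavors in the definition of $E^*$ the entire argument reduces to the elementary choice-of-$v_1$ contradiction above.
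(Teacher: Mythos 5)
Your argument for the range $2\leq k<(n+1)/2$ is the same as the paper's: assume $v^*\in V_2\setminus(E^*\cup B)$, observe that no forward arcs from $A$ hit $v^*$ and that $v^*$ has at most one backward arc into $A$, and then invoke condition~1 of Definition~\ref{def:blocking} to exhibit a $(k-1)$-blocking pair $(A\setminus\{v_1\},B\cup\{v^*\})$, which is a contradiction. The paper phrases this as a direct choice of one violating $v'$ (either the unique terminus of the lone backward arc, or arbitrary if none exists), whereas you run the quantifier the other way (``for every $v_1\in A$ there is a backward arc avoiding $v_1$'') and then pigeonhole; these are logically equivalent.

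Where the proposal overreaches is the closing claim that the range $(n+1)/2\leq k\leq n-1$ is ``handled by the parallel argument using condition~2.'' Condition~2 asserts that for all $v_1'\in U\setminus A$ and $v_2'\in B$, the pair $(A\cup\{v_1'\},B\setminus\{v_2'\})$ is not $(k+1)$-blocking; since $(A,B)$ is blocking, the guaranteed arc must be incident to the \emph{added} vertex $v_1'\in U\setminus A$ and lands in $B\setminus\{v_2'\}$. That constrains edges between $U\setminus A$ and $B$, not the backward degree of any $v^*\in V_2\setminus B$ into $A$, so no analogue of your step~(ii) falls out of it and the conclusion $v^*\in E'$ does not follow by dualizing. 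The published proof is silent on this range as well (it only ever constructs $(A\setminus\{v'\},B\cup\{v\})$ and says this ``contradicts the definition of forbidden pair,'' which is only a contradiction with condition~1), so this is a gap you share with the paper; but you should not assert that the mechanism carries over, because as written it does not.
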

\begin{proof}
	Suppose that there exists node $v\in V_2\backslash (E^*\cup B)$, then there exists no arc from $A$ to $v$ and there exists at most $1$ arc from $v$ to $A$. If such an arc exists, let $v'$ be the corresponding terminating node in $A$. Then we have $(A\backslash \{v'\}, B\cup\{v\})$ is a blocking pair, which is contradictory to the definition of forbidden pair.  If such an arc does not exist, let $v'$ be the an arbitrary node in $A$. Then we have $(A\backslash \{v'\}, B\cup\{v\})$ is a blocking pair, which is also contradictory to the definition of forbidden pair.
\end{proof}
The lemma~\ref{lm:estforbidden} implies that we can upper bound the conditional probability by
\begin{equation}
\alpha(k)\leq \mathbb{P}\left[\left(\bigcup\limits_{i=1}^kE_i\right)\cup E'=V_2\backslash B\right]= (1- P_1^kP_2)^{k-1},
\end{equation}
where $P_1$ and $P_2$ is defined as: for any node $v\in V_2\backslash B$,
\begin{align}
P_1=\mathbb{P}(v \notin E_i)&=\binom{k-2}{d_1}\bigg/\binom{k-1}{d_1}=\frac{k-d_1-1}{k-1},\notag\\
P_2=\mathbb{P}(v \notin E')&=1-\sum\limits_{l_1=2}^{d_2}\sum\limits_{l_2=l_1}^{d_2}\mathbb{P}(l_2)\cdot\binom{k}{l_1}\binom{n-k}{l_2-l_1}\bigg/\binom{n}{l_2}\notag\\
&\overset{(a)}{=}\left[\binom{m-k}{d_2}+k\binom{m-k}{d_2-1}\right]\bigg/\binom{m}{d_2}\notag\\
&\overset{(b)}{>}e^{1/6}\left(\frac{m-k}{m}\right)^{m-k}\left(\frac{m-d_2}{m-d_2-k}\right)^{m-k-d_2}\left(\frac{m-d_2}{m}\right)^{k}\notag\\
&\overset{(c)}{>}c_4e^{1/6-d_2}.
\end{align}
The above, step (a) utilizes Chu-Vandermonde identity twice; step (b) is adopts the inequality (\ref{eq:sterlin_2}); step (c) is based on the fact that if $n$ is sufficiently large, $(1-x/n)^n\geq c_5e^{-x}$, where $c_5$ is a constant. Combining the above estimation of $P_1$ and $P_2$, we have the following upper bound of $\alpha(k)$.
\begin{equation}\label{pf:eq:est_alpha}
\alpha(k)<\left[1-c_6\left(\frac{k-d_1-1}{k-1}\right)^k\right]^{k-1}.
\end{equation}

We finally estimate the probability that the graph $G^c(U,V_2,d_1,\bar{d})$ contains a perfect matching under the following two cases.

\textbf{Case 1:} The number of stragglers $s=\text{poly}(\log(n))$. Let $d_1=2,d_2=3$. Based on the estimation (\ref{pf:eq:keyest}), we have
\begin{equation}
\frac{n^2(m-k)^{3}}{(n-k)^2m^{3}}\leq \max\left\{1,\frac{n^2(s+1)^{3}}{(n+s)^{3}}\right\}\leq 1, \text{ for } n \text{ sufficiently large.} 
\end{equation}
Combining the above results with the estimation of $\beta(k)$, we have
\begin{equation}
\gamma(k)\leq \frac{c_3e^{-2}}{n}, 2\leq k\leq n-1.
\end{equation}
Then we can obtain that
\begin{align}
\mathbb{P}(G(U,V_2,d_1,\bar{d})\text{ contains perfect matching})&\geq 1- \sum\limits_{k=1}^n\binom{n}{k}\binom{n}{n-k+1}\alpha(k)\beta(k)\notag\\
&\geq 1- \sum\limits_{k=2}^{n-1} \frac{c_3e^{-2}}{n} \alpha(k)\notag\\
&\overset{(a)}{>} 1- \frac{c_3e^{-2}}{n} \sum\limits_{k=2}^{n-1}  \left[1-c_6\left(\frac{k-3}{k-1}\right)^k\right]^{k-1}\notag\\
&\overset{(b)}{>} 1- \frac{c_7}{n}.
\end{align}
The above, step (a) utilizes the estimation of $\alpha(k)$ in (\ref{pf:eq:est_alpha}); step (b) is based on estimating the tail of the summation as geometric series $(1-c_6/e^2)^{k-1}$.

\textbf{Case 2:} The number of stragglers $s=\Theta(n^{\alpha})$, $\alpha<1$. Let $d_1=2,d_2=2/(1-\alpha)$. For $2\leq k\leq n-2$, we have
\begin{equation}
\frac{n^2(m-k)^{3}}{(n-k)^2m^{3}}\leq \max\left\{1,\frac{n^2(n^{\alpha}+2)^{2/(1-\alpha)}}{4(n+n^{\alpha})^{2/(1-\alpha)}}\right\}\leq 1, \text{ for } n \text{ sufficiently large.} 
\end{equation}
Combining the above results with the estimation of $\beta(k)$, we have
\begin{equation}
\gamma(k)\leq \frac{c_3e^{-2}}{n}, 2\leq k\leq n-2.
\end{equation}
For $k=n-1$, we have
\begin{equation}
\frac{n^2(m-k)^{3}}{(n-k)^2m^{3}}\leq \max\left\{1,\left(\frac{n+n^{1-\alpha}}{n+n^{\alpha}}\right)^{2/(1-\alpha)}\right\}.
\end{equation}
If $\alpha\geq1/2$, we can directly obtain that $\gamma(k)\leq \frac{c_3e^{-2}}{n}$. If $\alpha<1/2$, we have
\begin{equation}
\gamma(k)\leq \frac{c_3e^{-2}}{n}\left(\frac{n+n^{1-\alpha}}{n+n^{\alpha}}\right)^{4/(1-\alpha)}\leq \frac{c_8e^{-2}}{n}, \text{ for } n \text{ sufficiently large.} 
\end{equation}
Similarly, we can obtain that
\begin{align}
\mathbb{P}(G(U,V_2,d_1,\bar{d})\text{ contains perfect matching})&\geq 1- \sum\limits_{k=1}^n\binom{n}{k}\binom{n}{n-k+1}\alpha(k)\beta(k)\notag\\
&\geq 1- \sum\limits_{k=2}^{n-1} \frac{e^{-2}\max\{c_3,c_8\}}{n} \alpha(k)\notag\\
&\overset{(b)}{>} 1- \frac{c_9}{n}.
\end{align}
Therefore,  in both cases, incorporating the above results into estimating (\ref{pf:eq:rank_matching}), the theorem follows.

\subsection{Proof of Theorem~\ref{thm:decode_diag}}

Based on our construction, the cardinality of the set 
\begin{equation}
|[n]\backslash\{i_1,\ldots,i_k\}|=n-k.
\end{equation}
Since $n<i_{k+1}<\cdots<i_{n}\leq n+s$, we have $n-k\leq s$. We next show, after we recover the blocks indexed by $[n]\backslash\{i_1,\ldots,i_k\}$, the rest blocks can be recovered by peeling decoding without rooting steps. Combining these results together, the total number of rooting steps is at most $s$.

Since we utilize the rooting step to recover blocks indexed by $[n]\backslash\{i_1,\ldots,i_k\}$, we obtain that matrix $\mathbf{M}$ $i$th column  $\mathbf{M}_i=\mathbf{0}$ for $i\in\{1,\ldots,i_1-1\}$. Based on our construction of the $s$-diagonal code, we have $m_{i_1i_1}\neq 0$, which implies $i_1$th block is a ripple. Then we can use the result $\mathbf{\tilde{y}_{i_1}}$ to recover block $\mathbf{y_{i_1}}$ and peel the $i_1$th column, which implies that $\mathbf{M}_{i_1}=\mathbf{0}$. Using the similar process, we can find a ripple $\mathbf{\tilde{y}_{i_2}}$ and peel the $i_2$th column. Continue this process, we can peel the $i_k$th column. 

Here we analyze the complexity of above procedure. During each iteration, the complexity of operation $\mathbf{\tilde{y}}_{j}=\mathbf{\tilde{y}}_{j}-m_{ji}\mathbf{A}_i\mathbf{x}$ is $O(r/n)$. There exists the total $n(s+1)$ the above operations. The complexity from peeling decoding is $r(s+1)$. The complexity in $s$ rooting steps (\ref{eq:borrow}) is $O(rs)$. Therefore, the total complexity is $O(rs)$ and theorem follows.

\end{document}